\title{Type safety of rewrite rules in dependent types} 
\author{Fr\'ed\'eric Blanqui}{Universit\'e Paris-Saclay, ENS Paris-Saclay, CNRS, Inria\and
  Laboratoire Sp\'ecification et V\'erification, 94235, Cachan, France}{}{https://orcid.org/0000-0001-7438-5554}{}
\authorrunning{F. Blanqui} 
\keywords{subject-reduction, rewriting, dependent types} 
\newcommand\hide[1]{}
\newcommand\D\Delta
\newcommand\G\Gamma
\newcommand\s\sigma
\newcommand\T\Theta
\newcommand\vep\varepsilon
\newcommand\B\Box
\newcommand\all\forall
\newcommand\sle\subseteq
\newcommand\tgt\rhd
\newcommand\mc\mathcal
\newcommand\mr\mathrm
\newcommand\mi\mathit
\newcommand\cD{\mc{D}}
\newcommand\cE{\mc{E}}
\newcommand\cF{\mc{F}}
\newcommand\cR{\mc{R}}
\newcommand\cS{\mc{S}}
\newcommand\cT{\mc{T}}
\newcommand\cV{\mc{V}}
\newcommand\FV{\mr{FV}}
\newcommand\Pos{\mr{Pos}}
\newcommand\h[1]{{\widehat{#1}}}
\newcommand\ie{{\em i.e.} }
\definecolor{green}{RGB}{0,130,0}
\definecolor{lightgrey}{RGB}{240,240,240}
\lstdefinelanguage{Dedukti}
{
  inputencoding=utf8,
  extendedchars=true,
  escapechar=`,
  numbers=none,
  numberstyle={},
  tabsize=2,
  basicstyle={\ttfamily\small\upshape},
  backgroundcolor=\color{lightgrey},
  keywords={abort,admit,apply,as,assert,assertnot,assume,compute,constant,definition,focus,in,injective,let,open,print,private,proof,proofterm,protected,qed,refine,reflexivity,require,rewrite,rule,set,simpl,symmetry,symbol,theorem,type,TYPE,with,why3},
  sensitive=true,
  keywordstyle=\color{blue},
  morecomment=[l]{//},
  commentstyle={\itshape\color{red}},
  string=[b]{"},
  stringstyle=\color{orange},
  showstringspaces=false,
  literate={Π}{$\Pi$}1 {→}{$\rightarrow$}1 {λ}{$\lambda$}1
  {≔}{$\coloneqq$}1 {↪}{$\hookrightarrow$}1
  {∧}{$\wedge$}1 {⇒}{$\Rightarrow$}1 {∀}{$\forall$}1
  {𝔹}{$\mathbb{B}$}1 {𝕃}{$\mathbb{L}$}1 {ℕ}{$\mathbb{N}$}1
  {α}{$\alpha$}1 {β}{$\beta$}1 {π}{$\pi$}1 {τ}{$\tau$}1 {ω}{$\omega$}1
  {≤}{$\le$}1 {≠}{$\neq$}1 {∉}{$\notin$}1
}
\begin{document}

\maketitle

\begin{abstract}
The expressiveness of dependent type theory can be
extended by identifying types modulo some additional computation rules. But, for
preserving the decidability of type-checking or the logical
consistency of the system, one must make sure that those user-defined
rewriting rules preserve typing. In this paper, we give a new
method to check that property using Knuth-Bendix completion.

\end{abstract}


\section{Introduction}

The $\lambda\Pi$-calculus, or LF \cite{harper93jacm}, is an extension of
the simply-typed $\lambda$-calculus with dependent types, that is, types
that can depend on values like, for instance, the type $Vn$ of vectors
of dimension $n$. And two dependent types like $Vn$ and $Vp$ are
identified as soon as $n$ and $p$ are two expressions having the same
value (modulo the evaluation rule of $\lambda$-calculus, $\beta$-reduction).

In the $\lambda\Pi$-calculus modulo rewriting, function and type symbols
can be defined not only by using $\beta$-reduction but also by using
rewriting rules \cite{terese03book}. Hence, types are identified
modulo $\beta$-reduction and some user-defined rewriting rules. This
calculus has been implemented in a tool called Dedukti
\cite{dedukti}.

Adding rewriting rules adds a lot of expressivity for encoding logical
or type systems. For instance, although the $\lambda\Pi$-calculus has no
native polymorphism, one can easily encode higher-order logic or the
calculus of constructions by using just a few symbols and rules
\cite{cousineau07tlca}. As a consequence, various tools have been
developed for translating actual terms and proofs from various systems
(Coq, OpenTheory, Matita, Focalize, \ldots) to Dedukti, and back,
opening the way to some interoperability between those systems
\cite{assaf19draft}. The Agda system recently started to experiment
with rewriting too \cite{cockx16types}.

To preserve the decidability of type-checking and the logical
consistency, it is however essential that the rules added by the user
preserve typing, that is, if an expression $e$ has some type $T$ and
a rewriting rule transforms $e$ into a new expression $e'$, then $e'$
should have type $T$ too. This property is also very important in
programming languages, to avoid some errors (a program declared to
return a string should not return an integer).

When working in the simply-typed $\lambda$-calculus, it is not too
difficult to ensure this property: it suffices to check that, for
every rewriting rule $l\hookrightarrow r$, the right-hand side (RHS) $r$
has the same type as the left-hand side (LHS) $l$, which is decidable.

The situation is however much more complicated when working with
dependent types modulo user-defined rewriting rules. As type-checking
requires one to decide the equivalence of two expressions, it is
undecidable in general to say whether a rewriting rule preserves
typing, even $\beta$-reduction alone \cite{saillard15phd}.

Note also that, in the $\lambda\Pi$-calculus modulo rewriting, the set of
well-typed terms is not fixed but depends on the rewriting rules
themselves (it grows when one adds rewriting rules).

Finally, the technique used in the simply-typed case (checking that
both the LHS and the RHS have the same type) is not satisfactory in
the case of dependent types, as it often forces rule left-hand sides
to be non-linear \cite{blanqui05mscs}, making other important
properties (namely confluence) more difficult to establish and the
implementation of rewriting less efficient (if it does not use
sharing).

\begin{example}\label{ex-beta}
  As already mentioned, Dedukti is often used to encode logical
  systems and proofs coming from interactive or automated theorem
  provers. For instance, one wants to be able to encode the
  simply-typed $\lambda$-calculus in Dedukti. Using the new Dedukti
  syntax\footnote{\url{https://github.com/Deducteam/lambdapi}}, this
  can be done as follows (rule variables must be prefixed by {\tt \$}
  to distinguish them from function symbols with the same name):

  \begin{lstlisting}
constant symbol T: TYPE // Dedukti type for representing simple types
constant symbol arr: T → T → T // arrow simple type constructor

injective symbol τ: T → TYPE// interprets T elements as Dedukti types 
rule τ (arr $x $y) ↪ τ $x → τ $y // (Curry-Howard isomorphism)

// representation of simply-typed λ-terms
symbol lam: Π a b, (τ a → τ b) → τ (arr a b)
symbol app: Π a b, τ (arr a b) → (τ a → τ b)

rule app $a $b (lam $a' $b' $f) $x ↪ $f $x // β-reduction
  \end{lstlisting}

  Proving that the above rule preserves typing is not trivial as it is
  equivalent to proving that $\beta$-reduction has the
  subject-reduction property in the simply-typed
  $\lambda$-calculus. And, indeed, the previous version of Dedukti was
  unable to prove it.

  The LHS is typable if $f$ is of type
  $\tau(\texttt{arr}\,a'\,b')$,
  $\tau(\texttt{arr}\,a'\,b')\simeq\tau(\texttt{arr}\,a\,b)$, and $x$
  is of type $\tau a$. Then, in this case, the LHS is of type $\tau b$.

  Here, one could be tempted to replace $a'$ by $a$, and $b'$ by $b$,
  so that these conditions are satisfied but this would make the
  rewriting rule non left-linear and the proof of its confluence
  problematic \cite{klop80phd}.

  Fortunately, this is not necessary. Indeed, we can prove that the
  RHS is typable and has the same type as the LHS by using the fact
  that $\tau(\texttt{arr}\,a'\,b')\simeq\tau(\texttt{arr}\,a\,b)$ when
  the LHS is typable. Indeed, in this case, and thanks to the rule
  defining $\tau$, $f$ is of type $\tau a\rightarrow\tau
  b$. Therefore, the RHS has type $\tau b$ as well.
\end{example}

In this paper, we present a new method for doing this kind of
reasoning automatically. By using Knuth-Bendix completion
\cite{knuth67,siekmann83book}, the equations holding when a LHS is
typable are turned into a convergent (\ie confluent and terminating)
set of rewriting rules, so that the type-checking algorithm of Dedukti
itself can be used to check the type of a RHS modulo these equations.

\bigskip
{\bf Outline.} The paper is organized as follows. In Section
\ref{sec-lp}, we recall the definition of the $\lambda\Pi$-calculus
modulo rewriting. In Section \ref{sec-sr}, we recall what it means for
a rewriting rule to preserve typing. In Section \ref{sec-new-algo}, we
describe a new algorithm for checking that a rewriting rule preserves
typing and provide general conditions for ensuring its
termination. Finally, in Section \ref{sec-conclu}, we compare this new
approach with previous ones and conclude. \hide{A prototype implementation by
  Jui-Hsuan Wu is available
  on:\\\url{https://github.com/wujuihsuan2016/lambdapi/tree/sr}.}

\section{$\lambda\Pi$-calculus modulo rewriting}
\label{sec-lp}

Following Barendregt's book on typed
$\lambda$-calculus \cite{barendregt92chapter}, the
$\lambda\Pi$-calculus is a Pure Type System (PTS) on the set of sorts
$\cS=\{\star,\B\}$:\footnote{PTS sorts should not be confused with the
  notion of sort used in first-order logic. The meaning of these sorts
  will be explained after the definition of typing (Definition
  \ref{dfn-typing}). Roughly speaking, $\star$ is the type of objects
  and proofs, and $\Box$ is the type of set families and predicates.}

\begin{definition}[$\lambda\Pi$-term algebra]
  A $\lambda\Pi$-term algebra is defined by:
  \begin{itemize}
  \item a set $\cF$ of function symbols,
  \item an infinite set $\cV$ of variables,
  \end{itemize}
  \noindent
  such that $\cV$, $\cF$ and $\cS$ are pairwise disjoint.

  The set $\cT(\cF,\cV)$ of $\lambda\Pi$-terms is then inductively defined
  as follows:

$$t,u := s\in\cS\mid x\in\cV\mid f\in\cF\mid \lambda x:t,u\mid tu\mid \Pi x:t,u$$

\noindent
where $\lambda x:t,u$ is called an abstraction, $tu$ an application, and
$\Pi x:t,u$ a (dependent) product (simply written $t\rightarrow u$ if $x$ does
not occur in $u$). As usual, terms are identified modulo renaming of
bound variables ($x$ is bound in $\lambda x:t,u$ and $\Pi x:t,u$).
We denote by $\FV(t)$ the free variables of $t$. A term is said to be closed if
it has no free variables.

A substitution is a finite map from $\cV$ to $\cT(\cF,\cV)$. It is
written as a finite set of pairs. For instance, $\{(x,a)\}$ is the
substitution mapping $x$ to $a$.

Given a substitution $\s$ and a term $t$, we denote by $t\s$ the
capture-avoiding replacement of every free occurrence of $x$ in $t$ by
its image in $\s$.
\end{definition}

\begin{definition}[$\lambda\Pi$-calculus]
  A $\lambda\Pi$-calculus on $\cT(\cF,\cV)$ is given by:
  \begin{itemize}
  \item a function $\T:\cF\rightarrow\cT(\cF,\cV)$ mapping every function
    symbol $f$ to a term $\T_f$ called its type (we will often
    write $f:A$ instead of $\T_f=A$),
  \item a function $\Sigma:\cF\rightarrow\cS$ mapping every function symbol $f$ to a
    sort $\Sigma_f$,
  \item a set $\cR$ of rewriting rules $(l,r)\in\cT^2$,
    written $l\hookrightarrow r$, such that $\FV(r)\sle\FV(l)$.
  \end{itemize}

  We then denote by $\simeq$ the smallest equivalence relation
  containing ${\hookrightarrow}={{\hookrightarrow_\cR}\cup{\hookrightarrow_\beta}}$ where $\hookrightarrow_\cR$ is the smallest
  relation stable by context and substitution containing $\cR$,
  and $\hookrightarrow_\beta$ is the usual $\beta$-reduction relation.
\end{definition}

\begin{example}
  For representing natural numbers, we can use the function symbols
  $N:\star$ of sort $\Box$, and the function symbols $0:N$ and $s:N\rightarrow N$
  of sort $\star$. Addition can be represented by $+:N\rightarrow N\rightarrow N$ of
  sort $\star$ together with the following set of rules:
  $$\begin{array}{r@{~~\hookrightarrow~~}l}
    0+y & y\\
    x+0 & x\\
    x+(sy) & s(x+y)\\
    (sx)+y & s(x+y)\\
    (x+y)+z & x+(y+z)\\
  \end{array}$$
  Note that Dedukti allows overlapping LHS and matching on defined
  symbols like in this example. (It also allows higher-order pattern
  matching like in Combinatory Reduction Systems (CRS)
  \cite{klop93tcs} but we do not consider this feature in the current
  paper.)
\end{example}

Throughout the paper, we assume a given $\lambda\Pi$-calculus
$\Lambda=(\cF,\cV,\Theta,\Sigma,\cR)$.

\begin{definition}[Well-typed terms]\label{dfn-typing}
A typing environment is a possibly empty ordered sequence of pairs
$(x_1,A_1)$, \ldots, $(x_n,A_n)$, written $x_1:A_1,\ldots,x_n:A_n$,
where the $x_i$'s are distinct variables and the $A_i$'s are terms.

A term $t$ has type $A$ in a typing environment $\G$ if the judgment
$\G\vdash t:A$ is derivable from the rules of Figure \ref{fig-lp}. An
environment $\G$ is valid if some term is typable in it.

A substitution $\s$ is a well-typed substitution from an environment
$\G$ to an environment $\G'$, written $\G'\vdash\s:\G$, if, for all
$x:A\in\G$, we have $\G'\vdash x\s:A\s$.
\end{definition}

\begin{figure}
  \caption{Typing rules of the $\lambda\Pi$-calculus modulo rewriting\label{fig-lp}}

  \begin{center}
    \begin{tabular}{ccl}
      (ax) & $\vdash\star:\Box$\\[1mm]
      (fun) & $\cfrac{\vdash\Theta_f:\Sigma_f}{\vdash f:\Theta_f}$\\[3mm]
      (var) & $\cfrac{\G\vdash A:s}{\G,x:A\vdash x:A}$ & ($x\notin\G$)\\[3mm]
      (weak) & $\cfrac{\G\vdash t:T\quad \G\vdash A:s}{\G,x:A\vdash t:T}$ & ($x\notin\G$)\\[3mm]
      (prod) & $\cfrac{\G\vdash A:\star\quad \G,x:A\vdash B:s}{\G\vdash\Pi x:A,B:s}$\\[3mm]
      (app) & $\cfrac{\G\vdash t:\Pi x:A,B\quad \G\vdash a:A}{\G\vdash ta:B\{(x,a)\}}$\\[3mm]
      (abs) & $\cfrac{\G,x:A\vdash b:B\quad \G\vdash\Pi x:A,B:s}{\G\vdash\lambda x:A,b:\Pi x:A,B}$\\[3mm]

      (conv) & $\cfrac{\G\vdash t:T\quad \G\vdash U:s}{\G\vdash t:U}$ & ($T\simeq U$)\\
    \end{tabular}
  \end{center}
\end{figure}

Note that well-typed substitutions preserve typing: if
$\G\vdash t:T$ and $\G'\vdash\s:\G$, then $\G'\vdash t\s:T\s$ \cite{blanqui01phd}.

A type-checking algorithm for the $\lambda\Pi$-calculus modulo
(user-defined) rewriting rules is implemented in the Dedukti tool
\cite{dedukti}.

We first recall a number of basic properties that hold whatever $\cR$
is and can be easily proved by induction on $\vdash$ \cite{blanqui01phd}:

\begin{lemma}\label{lem-basic}
  \begin{enumerate}[(a)]\itemsep=0mm
  \item If $t$ is typable, then every subterm of $t$ is typable.
  \item $\Box$ is not typable.
  \item If $\G\vdash t:T$ then either $T=\Box$ or $\G\vdash T:s$ for some sort $s$.
  \item If $\G\vdash t:\Box$ then $t$ is a kind, that is, of the form $\Pi
    x_1:T_1,\ldots,\Pi x_n:T_n,\star$.
  \item If $\G\vdash t:T$, $\G\sle\G'$ and $\G'$ is valid, then $\G'\vdash t:T$.
  \end{enumerate}
\end{lemma}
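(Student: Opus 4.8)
The plan is to prove all five statements by induction on the derivation of the typing judgement, case-analysing the last rule applied; the only results I may use freely are the typing rules of Figure~\ref{fig-lp} and the substitution property recalled just before the lemma. I would first establish, by an easy standalone induction, that $\G\vdash\star:\B$ whenever $\G$ is valid (induction on a derivation witnessing the validity of $\G$: use (weak) for the two environment-extending rules, and for every other rule a premise that already witnesses the validity of $\G$ by a smaller derivation). Next I would establish together, by one induction on $\vdash$, three interlocking structural facts: \emph{context-validity} --- every prefix of a valid environment is valid, and every declaration in a valid environment is well-sorted in the part preceding it; \emph{thinning}, which is exactly item (e) --- if $\G\vdash t:T$, $\G\sle\G'$ and $\G'$ is valid then $\G'\vdash t:T$; and \emph{product-inversion} --- if $\G\vdash\Pi x:A,B:C$ then $\G\vdash A:\star$ and $\G,x:A\vdash B:s$ for some sort $s$. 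These must be proved simultaneously because the abstraction step of thinning and the weakening step of product-inversion each need to insert a declaration into the middle of an environment while knowing that the result is valid.

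For (b), note that among the rules of Figure~\ref{fig-lp} only (weak) and (conv) leave the subject of the conclusion unconstrained; every other rule forces it to be $\star$, a symbol, a variable, a product, an application or an abstraction, never $\B$. Hence a derivation of $\G\vdash\B:T$ would have to end with (weak) or (conv), whose relevant premise again derives $\B$ in some environment by a strictly smaller derivation, contradicting the induction hypothesis. For (a), the induction is routine: in each rule every immediate subterm of the subject of the conclusion is either the subject of a premise --- possibly in an enlarged environment, as with the body of an abstraction --- or a subterm of the subject of a premise, so the induction hypothesis applies, and the subject of the conclusion is itself typed by that very rule.

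For (c), the cases (ax) and (fun) give $T=\B$ or $T$ typed by the premise of (fun); (var) and (weak) follow by one more application of (weak), using the induction hypothesis in the (weak) case; (prod) makes $T$ a sort, which is typed by $\B$ (via the fact established above) when it is $\star$; and (abs) and (conv) are immediate from the induction hypothesis. The substantive case is (app), with $\G\vdash ta:B\{(x,a)\}$ obtained from $\G\vdash t:\Pi x:A,B$ and $\G\vdash a:A$: the induction hypothesis gives $\G\vdash\Pi x:A,B:s$ (it is not $\B$), product-inversion gives $\G,x:A\vdash B:s'$, and since $\G\vdash a:A$ makes $\{(x,a)\}$ a well-typed substitution from $\G,x:A$ to $\G$, the substitution property yields $\G\vdash B\{(x,a)\}:s'$, a sort. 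For (d), where the conclusion's type is $\B$: the cases (fun), (var), (conv) and (app) cannot occur, since each would force a premise to type $\B$ --- directly, or, for (app), after noting that $B\{(x,a)\}=\B$ entails $a=\B$ or $B=\B$ and then invoking (c) and product-inversion --- contradicting (b); (abs), and (prod) when the codomain sort is $\star$, have a conclusion whose type is visibly not $\B$; (ax) gives the kind $\star$; (weak) is the induction hypothesis; and (prod) with codomain sort $\B$ makes $B$ a kind by the induction hypothesis, hence $\Pi x:A,B$ a kind. Item (e) is part of the structural facts above.

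The main obstacle is exactly the bootstrapping of context-validity, thinning and product-inversion: each of the three is used in proving the others --- the abstraction step of thinning needs an enlarged environment to be valid, which draws on product-inversion and context-validity, and the weakening step of product-inversion draws back on thinning --- so none can be done in isolation. This is the familiar Pure Type System phenomenon, whose resolution by a single simultaneous induction the author attributes to \cite{blanqui01phd}. Once that suite and the substitution property are available, the five items themselves are a direct unfolding of the typing rules, which is why the lemma can reasonably be called easily proved.
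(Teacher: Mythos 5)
Your proposal is correct and takes essentially the same route as the paper, which in fact gives no proof at all for this lemma --- it merely asserts that the properties ``can be easily proved by induction on $\vdash$'' and cites \cite{blanqui01phd}. Your detailed induction on the derivation, with the standard simultaneous bootstrapping of context-validity, thinning and product-inversion (the usual Pure Type System subtlety, which you rightly flag as the only delicate point), is exactly the intended argument.
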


Throughout the paper, we assume that, \hide{asm-tf}for all $f$,
$\vdash\Theta_f:\Sigma_f$. Indeed, if $\vdash\Theta_f:\Sigma_f$ does
not hold, then no well-typed term can contain $f$. (This assumption is
implicit in the presentations of LF using signatures
\cite{harper93jacm}.)

More importantly, we will assume that \hide{asm-cr}$\hookrightarrow$ is
confluent on the set $\cT(\cF,\cV)$ of untyped terms, that is, for all
terms $t,u,v\in\cT(\cF,\cV)$, if $t\hookrightarrow^*u$ and
$t\hookrightarrow^*v$, then there exists a term $w\in\cT(\cF,\cV)$ such
that $u\hookrightarrow^*w$ and $v\hookrightarrow^*w$, where $\hookrightarrow^*$ is
the reflexive and transitive closure of $\hookrightarrow$.

This condition is required for ensuring that conversion behaves
well with respect to products (if $\Pi x:A,B\simeq \Pi x:A',B'$ then
$A\simeq A'$ and $B\simeq B'$), which in particular implies
subject-reduction for $\hookrightarrow_\beta$.

This last assumption may look strong, all the more so since confluence
is undecidable. However this property is satisfied by many systems
in practice. For instance, $\hookrightarrow$ is confluent if the
left-hand sides of $\cR$ are algebraic (Definition \ref{dfn-alg}),
linear and do not overlap with each other \cite{oostrom94phd}. This is
in particular the case of the rewriting systems corresponding to the
function definitions allowed in functional programming languages such
as Haskell, Agda, OCaml or Coq. But confluence can be relaxed in some
cases: when there are no type-level rewriting rules
\cite{barbanera97jfp} or when the right-hand sides of type-level
rewriting rules are not products \cite{blanqui05mscs}.

When $\hookrightarrow$ is confluent, the typing relation satisfies additional
properties. For instance, the set of typable terms can be divided into
three disjoint classes:
\begin{itemize}\itemsep=0mm
\item the terms of type $\Box$, called kinds, of the form $\Pi
  x_1:A_1,\ldots,\Pi x_n:A_n,\star$;
\item the terms whose type is a kind, called predicates;
\item the terms whose type is a predicate, called objects.
\end{itemize}

\section{Subject-reduction}
\label{sec-sr}

A relation $\tgt$ preserves typing (subject-reduction property) if,
for all environments $\G$ and all terms $t$, $u$ and $A$, if $\G\vdash
t:A$ and $t\tgt u$, then $\G\vdash u:A$.

One can easily check that $\hookrightarrow_\beta$ preserves typing when $\hookrightarrow$ is
confluent \cite{blanqui01phd}. Our aim is therefore to check that
$\hookrightarrow_\cR$ preserves typing too. To this end, it is enough to check that
every rule $l\hookrightarrow r\in\cR$ preserves typing, that is, for all
environments $\G$, substitutions $\s$ and terms $A$, if $\G\vdash l\s:A$,
then $\G\vdash r\s:A$.

A first idea is to require that:

\begin{center}
  (*) there exist $\D$ and $B$ such that $\D\vdash l:B$ and $\D\vdash r:B$.
\end{center}

\noindent But this condition is not sufficient in general as shown by
the following example:

\begin{example}
  Consider the rule $f(xy)\hookrightarrow y$ with $f:B\rightarrow B$. In the environment
  $\D={x:B\rightarrow B,y:B}$, we have $\D\vdash l:B$ and $\D\vdash r:B$. However, in
  the environment $\G={x:A\rightarrow B,y:A}$, we have $\G\vdash l:B$ and $\G\vdash
  r:A$.
\end{example}

The condition (*) is sufficient if the rule left-hand side is a
non-variable simply-typed first-order term \cite{barbanera97jfp}, a
notion that we slightly generalize as follows:

\begin{definition}[Pattern]\label{dfn-alg}
  We assume that the set of variables is split in two disjoint sets,
  the algebraic variables and the non-algebraic ones, and that there
  is an injection $\h~$ from algebraic variables to non-algebraic
  variables.

  A term is algebraic if it is an algebraic variable or of the form
  $ft_1\ldots t_n$ with each $t_i$ algebraic and $f$ a function symbol
  whose type is of the form $\Pi x_1:A_1,\ldots,x_n:A_n,B$.

  A term is an object-level algebraic term if it is algebraic and all
  its function symbols are of sort $\star$.

  A pattern is an algebraic term of the form $ft_1\ldots t_n$ where
  each $t_i$ is an object-level algebraic term.
\end{definition}

The distinction between algebraic and non-algebraic variables is
purely technical: for generating equations (Definition
\ref{dfn-typ-constr}), we need to associate a type $\h{x}$ to every
variable $x$, and we need those variables $\h{x}$ to be distinct from
one another and distinct from the variables used in rules. To do so,
we split the set of variables into two disjoint sets. The ones used in
rules are called algebraic, and the others are called
non-algebraic. Finally, we ask the function $\h~$ to be an injection
from the set of algebraic variables to the set of non-algebraic
variables.

In the rest of the paper, we also assume that \hide{asm-lhs-alg} rule
left-hand sides are patterns. Hence, every rule is of the form
$f\,l_1\ldots l_n\hookrightarrow r$, and we say that a symbol $f\in\cF$ is
defined if there is in $\cR$ a rule of the form $f\,l_1\ldots
l_n\hookrightarrow r$.

However, the condition (*) is not satisfactory in the context of
dependent types. Indeed, when function symbols have dependent types,
it often happens that a term is typable only if it is non-linear. And,
with non-left-linear rewriting rules, $\hookrightarrow$ is generally not
confluent on untyped terms \cite{klop80phd}, while there exist many
confluence criteria for left-linear rewriting systems
\cite{oostrom94phd}.

Throughout the paper, we will use the following simple but
paradigmatic example to illustrate how our new algorithm works:

\begin{example}
  Consider the following rule to define the $\mi{tail}$ function on
  vectors:

  $$\mi{tail}~n~(\mi{cons}~x~p~v)\hookrightarrow v$$

  \noindent
  where $\mi{tail}:{\Pi n:N,V(sn)\rightarrow Vn}$, $V:{N\rightarrow\star}$,
  $\mi{nil}:{V0}$, $\mi{cons}:{R\rightarrow\Pi n:N,Vn\rightarrow}$ ${V(sn)}$ and
  $R:\star$.

  For the left-hand side to be typable, we need to take $p=n$,
  because $\mi{tail}~n$ expects an argument of type $V(sn)$, but
  $\mi{cons}~x~p~v$ is of type $V(sp)$.

  Yet, the rule with $p\neq n$ preserves typing. Indeed, assume that
  there is an environment $\G$, a substitution $\s$ and a term $A$
  such that $\G\vdash\mi{tail}~n\s~(\mi{cons}~x\s~p\s~v\s):A$. By
  inversion of typing rules, we get $V(n\s)\simeq A$, $\G\vdash A:s$ for
  some sort $s$, $V(sp\s)\simeq V(sn\s)$ and $\G\vdash v\s:Vp\s$. Assume
  now that $V$ and $s$ are undefined, that is, there is no rule of
  $\cR$ of the form $Vt\hookrightarrow u$ or $st\hookrightarrow u$. Then, by confluence,
  $p\s\simeq n\s$. Therefore, $Vp\s\simeq A$ and $\G\vdash v\s:A$.
\end{example}

Hence, that a rewriting rule $l\hookrightarrow r$ preserves typing does not mean
that its left-hand side $l$ must be typable
\cite{blanqui05mscs}. Actually, if no instance of $l$ is typable, then
$l\hookrightarrow r$ trivially preserves typing (since it can never be applied)!
The point is therefore to check that any typable instance of $l\hookrightarrow r$
preserves typing.

\section{A new subject-reduction criterion}
\label{sec-new-algo}

The new criterion that we propose for checking that $l\hookrightarrow r$
preserves typing proceeds in two steps. First, we generate conversion
constraints that are satisfied by every typable instance of $l$
(Figure \ref{fig-typ-constr}). Then, we try to check that $r$ has the
same type as $l$ in the type system where the conversion relation is
extended with the equational theory generated by the conversion
constraints inferred in the first step. For type-checking in this
extended type theory to be decidable and implementable using Dedukti
itself, we use Knuth-Bendix completion \cite{knuth67} to replace the
set of conversion constraints by an equivalent but convergent (\ie
terminating and confluent) set of rewriting rules.

\subsection{Inference of typability constraints}

We first define an algorithm for inferring typability constraints and
then prove its correctness and completeness.

\begin{definition}[Typability constraints]\label{dfn-typ-constr}
  For every algebraic term $t$, we assume given a valid environment
  $\D_t={{\h{y_1}:\star},{y_1:\h{y_1}},\ldots,{\h{y_k}:\star},{y_k:\h{y_k}}}$ where
  $y_1,\ldots,y_k$ are the free variables of $t$.
  
  Let $\uparrow$ be the partial function defined in Figure
  \ref{fig-typ-constr}. It takes as input a term $t$ and returns a
  pair $(A,\cE)$, written $A[\cE]$, where $A$ is a term and $\cE$ is a
  set of equations, an equation being a pair of terms $(l,r)$ usually
  written $l=r$.

  A substitution $\s$ satisfies a set $\cE$ of equations, written
  $\s\models\cE$, if for all equations ${a=b}\in{\cE}$, $a\s\simeq b\s$.
\end{definition}

\begin{figure}[ht]
  \caption{Typability constraints\label{fig-typ-constr}}
  $$\cfrac{}{y\uparrow\h{y}[\emptyset]}$$

  $$\cfrac{f:\Pi x_1:T_1,\ldots,\Pi x_n:T_n,U\quad t_1\uparrow A_1[\cE_1]\quad t_n\uparrow A_n[\cE_n]}{\begin{array}{c}ft_1\ldots t_n\uparrow U\s[\cE_1\cup\ldots\cup\cE_n\cup\{A_1=T_1\s,\ldots,A_n=T_n\s\}]\\\text{where }\s=\{(x_1,t_1),\ldots,(x_n,t_n)\}\end{array}}$$
\end{figure}

\begin{example}
  In our running example $\mi{tail}~n~(\mi{cons}~x~p~v)\hookrightarrow v$,
  we have $\mi{cons}~x~p~v\uparrow V(sp)[\cE_1]$ with
  $\cE_1=\{{\h{x}=T},{\h{p}=N},{\h{v}=Vp}\}$, and
  $\mi{tail}~n~(\mi{cons}~x~p~v)\uparrow Vn[\cE_2]$ with
  $\cE_2=\cE_1\cup\{{\h{n}=N},{V(sp)=V(sn)}\}$.
\end{example}

\begin{lemma}\label{lem-prod}
  If $\G\vdash\Pi x_1:T_1,\ldots,\Pi x_n:T_n,U:s$ then, for all $i$,
  $\G^{i-1}\vdash T_i:\star$ and $\G^n\vdash U:s$, where
  $\G^i=\G,x_1:T_1,\ldots,x_i:T_i$.
\end{lemma}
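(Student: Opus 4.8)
The plan is to proceed by induction on $n$, the number of products. The base case $n=0$ is trivial: the hypothesis is $\G\vdash U:s$ and $\G^0=\G$, so there is nothing to prove. For the inductive step, suppose $\G\vdash\Pi x_1:T_1,\Pi x_2:T_2,\ldots,\Pi x_n:T_n,U:s$. The key observation is that this type is itself of product form $\Pi x_1:T_1,B$ with $B=\Pi x_2:T_2,\ldots,\Pi x_n:T_n,U$, and I want to recover $\G\vdash T_1:\star$ and $\G,x_1:T_1\vdash B:s$ so that the induction hypothesis applies to $B$ in the environment $\G^1=\G,x_1:T_1$.

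The heart of the argument is therefore a \emph{generation} (or \emph{inversion}) lemma for products: if $\G\vdash\Pi x:A,B:s$, then $\G\vdash A:\star$ and $\G,x:A\vdash B:s'$ for some sort $s'$, and in fact $s'=s$. This is proved by induction on the typing derivation of $\G\vdash\Pi x:A,B:s$. The only rules that can derive a judgment whose subject is a product are (prod), (weak) and (conv). The (prod) case is immediate by reading off the premises. For (weak), the derivation ends with $\G',y:C\vdash\Pi x:A,B:s$ coming from $\G'\vdash\Pi x:A,B:s$ and $\G'\vdash C:s''$; the induction hypothesis gives $\G'\vdash A:\star$ and $\G',x:A\vdash B:s$, and one then re-applies (weak) using Lemma \ref{lem-basic}(e) (weakening for valid supersets of the environment) to move to $\G',y:C$. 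For (conv), the derivation ends with $\G\vdash\Pi x:A,B:U$ from $\G\vdash\Pi x:A,B:T$ with $T\simeq U$; here $T$ must already be a sort — indeed by Lemma \ref{lem-basic}(c) either $T=\Box$ or $\G\vdash T:s''$, and a product has type $\Box$ only vacuously since by Lemma \ref{lem-basic}(d) a term of type $\Box$ is a kind $\Pi\ldots,\star$ whereas a sort is not; so $T$ and $U$ are convertible sorts, and by confluence of $\hookrightarrow$ (which is assumed, and which in particular entails that distinct sorts are not convertible) we get $T=U$, reducing to the induction hypothesis. Here I use crucially that in the $\lambda\Pi$-calculus both sorts $\star$ and $\Box$ are available and that the only possibility for the domain sort of a product is $\star$, which is exactly what rule (prod) enforces.

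With the generation lemma in hand, the inductive step of the main lemma is routine: from $\G\vdash\Pi x_1:T_1,B:s$ with $B=\Pi x_2:T_2,\ldots,\Pi x_n:T_n,U$, generation gives $\G\vdash T_1:\star$ (which is the claim for $i=1$) and $\G,x_1:T_1\vdash B:s$; applying the induction hypothesis to $B$ over the environment $\G^1$ yields $\G^{1+(j-1)}=\G^j\vdash T_{j+1}:\star$ for the remaining $j$ and $\G^n\vdash U:s$, completing all cases. I expect the main obstacle to be the (conv) case of the generation lemma, specifically the step ruling out that a product has type $\Box$ and forcing convertible sorts to be equal; this is where the standing confluence assumption on $\hookrightarrow$ and Lemma \ref{lem-basic}(c)--(d) are needed, and it is the only place where anything beyond a direct structural analysis of the derivation is used.
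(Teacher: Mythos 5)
Your overall strategy---induction on $n$ driven by a generation lemma for products, with weakening in the (weak) case and confluence to identify convertible sorts---is the same as the paper's (its proof is a one-line appeal to ``inversion of typing rules and weakening'' plus the observation that $s\simeq s'$ iff $s=s'$). However, your treatment of the (conv) case, which you yourself identify as the crux, contains a genuine gap. First, the claim that ``a product has type $\B$ only vacuously'' is false: by Lemma \ref{lem-basic}(d) the terms of type $\B$ are exactly the kinds $\Pi x_1:T_1,\ldots,\Pi x_k:T_k,\star$, and these \emph{are} products (for $k\ge 1$). Indeed the lemma being proved must cover $s=\B$, since $\G^n\vdash U:s$ is asserted for the same sort $s$ as in the hypothesis. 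Second, and more importantly, in a (conv) step ending in $\G\vdash\Pi x:A,B:s$ the premise has the form $\G\vdash\Pi x:A,B:T$ with $T\simeq s$, and nothing forces $T$ to \emph{be} a sort: Lemma \ref{lem-basic}(c) only says $T=\B$ or $T$ is typable by a sort, and a typable $T$ convertible to $\star$ need not equal $\star$ (e.g.\ $T$ could be a redex for a type-level rule whose right-hand side is $\star$). Since your generation lemma is stated only for judgments whose type is literally a sort, the induction hypothesis does not apply to that premise, and the argument stalls.

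The standard repair is to strengthen the generation lemma to an arbitrary type: if $\G\vdash\Pi x:A,B:C$ then $\G\vdash A:\star$ and $\G,x:A\vdash B:s'$ for some sort $s'$ with $s'\simeq C$. Then (prod) gives $s'=C$, (weak) goes through as you describe via Lemma \ref{lem-basic}(e), and (conv) is immediate by transitivity of $\simeq$ with no case analysis on $T$ at all. Only at the very end, when $C$ is the given sort $s$, do you invoke confluence (together with the fact that left-hand sides are patterns, so sorts are $\hookrightarrow$-normal) to conclude $s'=s$ from $s'\simeq s$. With that restatement the rest of your plan, including the outer induction on $n$, is correct and matches the paper's intent.
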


\begin{proof}
Since $\hookrightarrow$ is confluent and left-hand sides are patterns, $s\simeq s'$
iff $s=s'$. The result follows then by inversion of typing rules and
weakening.\\
\end{proof}

In particular, because \hide{asm-tf}$\vdash\Theta_f:\Sigma_f$ for all
$f$, we have:

\begin{corollary}\label{cor-fun-type}
  For all function symbols $f:\Pi x_1:T_1,\ldots,\Pi x_n:T_n,U$ and
  integer $i$, we have $\G_f^{i-1}\vdash T_i:\star$ and $\G_f^n\vdash
  U:\Sigma_f$ where $\G_f^i=x_1:T_1,\ldots,x_i:T_i$.
\end{corollary}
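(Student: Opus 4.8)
The plan is to derive Corollary~\ref{cor-fun-type} directly from Lemma~\ref{lem-prod} by instantiating it with the empty environment. First I would recall the standing assumption that $\vdash\Theta_f:\Sigma_f$ for every function symbol $f$. Writing $\Theta_f=\Pi x_1:T_1,\ldots,\Pi x_n:T_n,U$, this assumption says exactly that $\vdash\Pi x_1:T_1,\ldots,\Pi x_n:T_n,U:\Sigma_f$, that is, the hypothesis of Lemma~\ref{lem-prod} holds with $\G$ the empty environment and $s=\Sigma_f$.

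Next I would apply Lemma~\ref{lem-prod} with these choices. The lemma yields, for each $i$, that $\G^{i-1}\vdash T_i:\star$ and $\G^n\vdash U:s$, where $\G^i=\G,x_1:T_1,\ldots,x_i:T_i$. Since $\G$ is empty, $\G^i$ is just $x_1:T_1,\ldots,x_i:T_i$, which is precisely the environment $\G_f^i$ named in the statement of the corollary. Substituting $s=\Sigma_f$ gives $\G_f^{i-1}\vdash T_i:\star$ and $\G_f^n\vdash U:\Sigma_f$, which is the claim.

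There is essentially no obstacle here: the corollary is a specialization of the lemma, and the only thing to verify is that the side condition of Lemma~\ref{lem-prod} (namely that the product is well-typed in some environment at some sort) is supplied by the global assumption on $\Theta_f$. The one point worth stating explicitly is why we may take $\G$ to be empty: the typing rule (fun) in Figure~\ref{fig-lp} derives $\vdash f:\Theta_f$ precisely from a premise $\vdash\Theta_f:\Sigma_f$ in the empty environment, so the assumption is exactly in the form needed. I would therefore keep the proof to one or two sentences, simply invoking Lemma~\ref{lem-prod} on $\vdash\Theta_f:\Sigma_f$.
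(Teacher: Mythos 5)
Your proof is correct and matches the paper's intent exactly: the corollary is stated immediately after the remark that $\vdash\Theta_f:\Sigma_f$ holds for all $f$, and it is obtained precisely by instantiating Lemma~\ref{lem-prod} with the empty environment and $s=\Sigma_f$, as you do.
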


\begin{lemma}\label{lem-app}
  For all environments $\G$, terms $t,x_1,T_1,\ldots,x_n,T_n,U,T$ and
  substitutions $\s$ for $x_1,\ldots,x_n$, if $\G\vdash t:\Pi
  x_1:T_1,\ldots,\Pi x_n:T_n,U$ and $\G\vdash tx_1\s\ldots x_n\s:T$,
  then $U\s\simeq T$ and $\G\vdash\s:\D^n$ where
  $\D^n=x_1:T_1,\ldots,x_n:T_n$.
\end{lemma}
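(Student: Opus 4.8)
The plan is to prove the statement by induction on $n$. The base case $n=0$ has no arguments: the hypotheses become $\G\vdash t:U$ and $\G\vdash t:T$, the claim $\G\vdash\s:\D^0$ is vacuous, and what remains, $U\simeq T$, is uniqueness of types up to conversion. That fact is a standard consequence of confluence --- it is what makes the inversion lemma invert products --- and is proved by a routine induction on the typing derivation, just like Lemma~\ref{lem-basic}; I would either fold it into the preliminary lemmas or cite it.

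For the inductive step I would peel off the last argument. Applying inversion of the typing rules to $\G\vdash(t\,x_1\s\cdots x_n\s)\,x_{n+1}\s:T$ yields terms $A$ and $B$ with $\G\vdash t\,x_1\s\cdots x_n\s:\Pi x_{n+1}:A,B$, $\G\vdash x_{n+1}\s:A$, and $B\{(x_{n+1},x_{n+1}\s)\}\simeq T$. Now read $\Pi x_1:T_1,\ldots,\Pi x_{n+1}:T_{n+1},U$ as $\Pi x_1:T_1,\ldots,\Pi x_n:T_n,(\Pi x_{n+1}:T_{n+1},U)$ and apply the induction hypothesis to $t$ with final type $\Pi x_{n+1}:T_{n+1},U$ and application type $\Pi x_{n+1}:A,B$: this gives $\G\vdash\s:\D^n$ together with $(\Pi x_{n+1}:T_{n+1},U)\s\simeq\Pi x_{n+1}:A,B$. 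Since $\hookrightarrow$ is confluent, conversion is compatible with products (Section~\ref{sec-lp}), so $T_{n+1}\s\simeq A$ and $U\s\simeq B$; substituting $x_{n+1}\mapsto x_{n+1}\s$ in the second and using stability of $\simeq$ under substitution gives $U\s\simeq T$, the first half of the conclusion. From $\G\vdash\s:\D^n$ we read off $\G\vdash x_i\s:T_i\s$ for $i\le n$ (using that $x_{n+1}$ is not free in $T_i$).

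The only point that requires a small argument --- and where I expect the real work to be --- is upgrading $\G\vdash x_{n+1}\s:A$ to $\G\vdash x_{n+1}\s:T_{n+1}\s$ by rule (conv), which needs $T_{n+1}\s$ to be typable. For this I would argue: $\Pi x_1:T_1,\ldots,\Pi x_{n+1}:T_{n+1},U$, being the type of $t$, is typable, and being a product it is not $\B$, so by Lemma~\ref{lem-basic}(c) it has a sort; Lemma~\ref{lem-prod} then gives $\G,x_1:T_1,\ldots,x_n:T_n\vdash T_{n+1}:\star$. But the typings $\G\vdash x_i\s:T_i\s$ for $i\le n$ obtained above say precisely that $\s$ is a well-typed substitution from $\G,x_1:T_1,\ldots,x_n:T_n$ to the valid environment $\G$, so by the substitution lemma (well-typed substitutions preserve typing) we get $\G\vdash T_{n+1}\s:\star$; now (conv) applies and gives $\G\vdash x_{n+1}\s:T_{n+1}\s$, hence $\G\vdash\s:\D^{n+1}$. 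Apart from this, the only nuisance is the usual bound-variable hygiene: one should set things up (choosing the $x_i$ distinct from one another and from the free variables of $\G$, $t$ and the introduced binders) so that all the substitutions involved compose without capture; I would dispatch this once and for all at the start of the proof.
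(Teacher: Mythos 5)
Your proof is correct and follows essentially the same route as the paper's: induction on $n$ peeling off the last argument, inversion of the application typing, the induction hypothesis, product-compatibility of conversion via confluence, and the substitution lemma to show $T_{n+1}\s$ is typable so that (conv) can upgrade $\G\vdash x_{n+1}\s:A$ to $\G\vdash x_{n+1}\s:T_{n+1}\s$. The differences are cosmetic: your indexing runs $n\to n+1$ rather than $n-1\to n$, and you cite Lemmas \ref{lem-basic}(c) and \ref{lem-prod} explicitly where the paper just says ``by inversion of typing rules and weakening''.
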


\begin{proof}
  Let $\s_i=\{(x_1,t_1),\ldots,(x_{i-1},t_{i-1})\}$. We proceed by
  induction on $n$.
  \begin{itemize}
  \item Case $n=0$. By equivalence of types.
  \item Case $n>0$. By inversion of typing rules and weakening,
    $\G,\D^{n-1}\vdash T_n:\star$, $\G\vdash tx_1\s_{n-1}\ldots
    x_{n-1}\s_{n-1}:\Pi x_n:A,B$, $\G\vdash x_n\s:A$ and
    $B\{(x_n,x_n\s)\}\simeq T$. By induction hypothesis,
    $\G\vdash\s_{n-1}:\D^{n-1}$ and $(x_n:T_n\s_{n-1})U\s_{n-1}\simeq\Pi
    x_n:A,B$. By substitution, $\G\vdash T_n\s_{n-1}:\star$. By
    confluence\hide{asm-cr}, $T_n\s_{n-1}\simeq A$ and
    $U\s_{n-1}\simeq B$. Therefore, by conversion, $\G\vdash x_n\s:T_n\s$
    and $\G\vdash\s:\D^n$. Now, $x_n$ can always be chosen so that
    $U\s=U\s_{n-1}\{(x_n,x_n\s)\}$. Therefore, $U\s\simeq
    B\{(x_n,x_n\s)\}\simeq T$.
  \end{itemize}
\end{proof}

\begin{lemma}\label{lem-infer}
  \begin{itemize}
  \item (Correctness) For all algebraic terms $t$, terms $T$ and sets
    of equations $\cE$, if $t\uparrow T[\cE]$ then, for all valid environments
    $\G$, substitutions $\h\theta$ such that $\G\vdash\h\theta:\D_t$ and
    $\h\theta\models\cE$, we have $\G\vdash t\h\theta:T\h\theta$.

  \item (Completeness) For all environments $\G$, patterns $t$,
    substitutions $\theta$ and terms $A$, if $\G\vdash t\theta:A$, then there are
    a term $T$, a set of equations $\cE$ and a substitution $\h\theta$
    extending $\theta$ such that $t\uparrow T[\cE]$, $\h\theta\models\cE$,
    $\G\vdash\h\theta:\D_t$ and $A\simeq T\h\theta$.
  \end{itemize}
\end{lemma}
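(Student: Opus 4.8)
The plan is to establish the two directions separately: \emph{correctness} by induction following the definition of $\uparrow$ (the two rules of Figure~\ref{fig-typ-constr}), and \emph{completeness} by induction on the pattern, combined with inversion of the typing rules via Lemma~\ref{lem-app}.

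For \emph{correctness}, I would argue by structural induction on $t$ (equivalently, on the derivation of $t\uparrow T[\cE]$). The variable case $y\uparrow\h y[\emptyset]$ is immediate: $(\h y:\star)$ and $(y:\h y)$ belong to $\D_t$, so $\G\vdash\h\theta:\D_t$ yields $\G\vdash y\h\theta:\h y\h\theta$. For $t=ft_1\ldots t_n$ with $f:\Pi x_1:T_1,\ldots,\Pi x_n:T_n,U$ and $t_i\uparrow A_i[\cE_i]$, one has $\cE_i\sle\cE$ and $\D_{t_i}\sle\D_t$, so the induction hypothesis gives $\G\vdash t_i\h\theta:A_i\h\theta$ for each $i$. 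I would then run a secondary induction on $k=0,\ldots,n$ showing, with $\tau_k=\{(x_i,t_i\h\theta):i\le k\}$, that $\G\vdash f(t_1\h\theta)\ldots(t_k\h\theta):(\Pi x_{k+1}:T_{k+1},\ldots,\Pi x_n:T_n,U)\tau_k$. The base case $k=0$ is $\G\vdash f:\T_f$, from rule (fun) and weakening (Lemma~\ref{lem-basic}(e)), $\G$ being valid. For the step: the equation $A_{k+1}=T_{k+1}\s$ of $\cE$ together with $\h\theta\models\cE$ gives $A_{k+1}\h\theta\simeq(T_{k+1}\s)\h\theta$, which equals $T_{k+1}\tau_k$ because $T_{k+1}$ only mentions $x_1,\ldots,x_k$ (Corollary~\ref{cor-fun-type}); from the secondary hypothesis, Lemma~\ref{lem-basic}(c) and Lemma~\ref{lem-prod} one gets $\G\vdash T_{k+1}\tau_k:\star$, so rule (conv) retypes $t_{k+1}\h\theta$ at $T_{k+1}\tau_k$ and rule (app) yields the claim for $k+1$. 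Taking $k=n$ gives $\G\vdash t\h\theta:U\tau_n=(U\s)\h\theta=T\h\theta$.

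For \emph{completeness}, I would argue by structural induction on $t$, strengthening the statement so that the variable case is only ever invoked with $\G\vdash A:\star$; this is maintained along the recursion because the argument types of function symbols are $\star$-types (Corollary~\ref{cor-fun-type}), so the subterms reached are objects. In the variable case $t=y$, take $T=\h y$, $\cE=\emptyset$ and $\h\theta=\theta\cup\{(\h y,A)\}$: then $\h\theta\models\emptyset$, $T\h\theta=A$, and $\G\vdash\h\theta:\D_t$ follows from $\G\vdash A:\star$ and $\G\vdash y\theta:A$. In the case $t=fl_1\ldots l_n$ with $f:\Pi x_1:T_1,\ldots,\Pi x_n:T_n,U$, from rule (fun), weakening, and Lemma~\ref{lem-app} applied to $\G\vdash t\theta:A$ with the substitution $x_i\mapsto l_i\theta$, I obtain $U\{(x_i,l_i\theta)\}\simeq A$ and $\G\vdash l_i\theta:T_i\{(x_j,l_j\theta):j<i\}$ for each $i$. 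Each $l_i$ is an object-level algebraic term typed at a $\star$-type, so the induction hypothesis supplies $A_i$, $\cE_i$ and a substitution $\h\theta_i$ agreeing with $\theta$ on ordinary variables, with $l_i\uparrow A_i[\cE_i]$, $\h\theta_i\models\cE_i$, $\G\vdash\h\theta_i:\D_{l_i}$ and $T_i\{(x_j,l_j\theta):j<i\}\simeq A_i\h\theta_i$. It then remains to merge the $\h\theta_i$ into one substitution $\h\theta$ and to take $T=U\s$, $\cE=\cE_1\cup\ldots\cup\cE_n\cup\{A_1=T_1\s,\ldots,A_n=T_n\s\}$ with $\s=\{(x_1,l_1),\ldots,(x_n,l_n)\}$, which is exactly what the $\uparrow$-rule produces for $t$.

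The merge is where the only genuine difficulty lies, and it is precisely the point where the possible non-left-linearity of the rule is turned into conversion constraints. On ordinary variables the $\h\theta_i$ already coincide with $\theta$; for a variable $\h y$ with $y$ occurring in several $l_i$, each candidate value $\h\theta_i(\h y)$ is, reading off $\G\vdash\h\theta_i:\D_{l_i}$, a type of $y\theta$ in $\G$, hence all of them are pairwise convertible by uniqueness of types (a consequence of the confluence assumption); so one may set $\h\theta(\h y)$ to any of them. Since the equations generated by $\uparrow$ have hat-free right-hand sides and $\simeq$ is a congruence, $\h\theta\models\cE_i$ is preserved for every $i$, and $\h\theta\models\{A_i=T_i\s\}$ holds because $A_i\h\theta\simeq A_i\h\theta_i\simeq T_i\{(x_j,l_j\theta):j<i\}=(T_i\s)\h\theta$; finally $\G\vdash\h\theta:\D_t$, and $A\simeq U\{(x_i,l_i\theta)\}=(U\s)\h\theta=T\h\theta$. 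Everything else — the secondary induction in correctness, the scoping bookkeeping on the $T_i$, and checking that the recursive calls in completeness stay at $\star$-types — is routine.
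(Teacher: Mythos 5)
Your proposal is correct and follows essentially the same route as the paper: structural induction for correctness with an inner induction over the arguments of $f$ using the generated equations $A_i=T_i\s$ and (conv) (the paper phrases this as building the well-typed substitution $\G\vdash\s\h\theta:\G_f^i$ and then applying the substitution property, where you instead iterate (app) directly — an immaterial difference), and for completeness the same two-tier structure (object-level algebraic terms with the $\G\vdash A:\star$ strengthening, then patterns), the same use of Lemma~\ref{lem-app} for inversion, and the same merge of the $\h\theta_i$ justified by uniqueness of types up to conversion.
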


\begin{proof}
  \begin{itemize}
  \item (Correctness) By induction on $t$.
    \begin{itemize}
    \item Case $t=y$. Then, $T=\h{y}$ and $\cE=\emptyset$. By assumption, we
      have $\G\vdash y\theta:\h{y}\theta$. Therefore, $\G\vdash t\theta:T\theta$.
      
    \item Case $t=ft_1\ldots t_n$ with $f:\Pi
      x_1:T_1,\ldots,x_n:T_n,U$, $t_1\uparrow A_1[\cE_1]$, \ldots, $t_n\uparrow
      A_n[\cE_n]$. Then, $T=U\s$ and $\cE=\cE_1\cup\ldots\cup
     \cE_n\cup\{A_1=T_1\s,\ldots,A_n=T_n\s\}$ where
      $\s=\{(x_1,t_1),\ldots,(x_n,t_n)\}$.

      By Lemma \ref{lem-prod}, we have $\G_f^{i-1}\vdash T_i:\star$.

      By induction hypothesis, for all $i$, we have $\G\vdash
      t_i\h\theta:A_i\h\theta$ and $A_i\h\theta\simeq T_i\s\h\theta$.

      We now prove that, for all $i$, $\G\vdash T_i\s\h\theta:\star$ and
      $\G\vdash x_i\s\h\theta:T_i\s\h\theta$, hence that $\G\vdash\s:\G_f^i$, by
      induction on $i$.
      \begin{itemize}
      \item Case $i=1$. Since $\vdash T_1:\star$, $T_1$ is closed and
        $T_1\s\h\theta=T_1$. Therefore, by weakening, $\G\vdash
        T_1\s\h\theta:\star$ and, by conversion, $\G\vdash
        x_1\s\h\theta:T_1\s\h\theta$.
        
      \item Case $i>1$. By induction hypothesis,
        $\G\vdash\s\h\theta:\G_f^{i-1}$. Since $\G_f^{i-1}\vdash T_i:\star$, by
        substitution, we get $\G\vdash T_i\s\h\theta:\star$. Therefore, by
        conversion, $\G\vdash x_i\s\h\theta:T_i\s\h\theta$.
      \end{itemize}

      Hence, $\G\vdash\s\h\theta:\G_f^n$. Now, since $\G_f^n\vdash fx_1\ldots
      x_n:U$, by substitution, we get $\G\vdash t:U\s\h\theta$.
    \end{itemize}

  \item (Completeness) We first prove completeness for object-level
    algebraic terms $t$ such that $\G\vdash A:\star$, by induction on
    $t$.
    \begin{itemize}
    \item Case $t=y$. We take $T=\h{y}$, $\cE=\emptyset$ and
      $\h\theta=\theta\cup\{(\h{y},A)\}$. We have $t\uparrow T[\cE]$, $\h\theta\models\cE$
      and $A\simeq T\h\theta$. Now, $\G\vdash y\h\theta:\h{y}\h\theta$ and $\G\vdash
      \h{y}\h\theta:\star$. Therefore, $\G\vdash\h\theta:\D_t$.

    \item Case $t=ft_1\ldots t_n$ with $f:\Pi
      x_1:T_1,\ldots,x_n:T_n,U$. By Lemma \ref{lem-prod}, for all $i$,
      we have $\G_f\vdash x_i:T_i$ and $\G_f\vdash T_i:\star$, where
      $\G_f=x_1:T_1,\ldots,x_n:T_n$. By Lemma \ref{lem-app} because
      \hide{asm-tf}$\vdash\Theta_f:\Sigma_f$ for all $f$, we have $A\simeq
      U\s\theta$ and $\G\vdash\s\theta:\G_f$. Hence, by substitution,
      for all $i$, we have $\G\vdash t_i\theta:T_i\s\theta$ and
      $\G\vdash T_i\s\theta:\star$. Therefore, by induction
      hypothesis, there are $A_i$, $\cE_i$ and $\h\theta_i$ extending
      $\theta$ such that $t_i\uparrow A_i[\cE_i]$,
      $\h\theta_i\models\cE_i$, $\G\vdash\h\theta_i:\D_{t_i}$ and
      $T_i\s\theta\simeq A_i\h\theta_i$. Then, let $T=U\s$,
      ${\cE}={{\cE_1}\cup\ldots\cup{\cE_n}\cup{\{(A_1,T_1\s),\ldots,(A_n,T_n\s)\}}}$,
      $y\h\theta=y\theta$ if $y\in\FV(t)$, and
      $\h{y}\h\theta=\h{y}\h\theta_i$ where $i$ is the smallest
      integer such that $y\in\FV(t_i)$. Then, we have $t\uparrow
      T[\cE]$ and $A\simeq U\s\theta=T\h\theta$.

      If $y\in\FV(t_i)\cap\FV(t_j)$, then $y\h\theta_i=y\theta=y\h\theta_j$ since
      $\h\theta_i$ and $\h\theta_j$ are both extensions of $\theta$. Now, if
      $\G\vdash y\h\theta_i:\h{y}\h\theta_i$ and $\G\vdash y\h\theta_j:\h{y}\h\theta_j$
      then, by equivalence of types,
      $\h{y}\h\theta_i\simeq\h{y}\h\theta_j$. Therefore, $\h\theta\models\cE$ and
      $\G\vdash\h\theta:\D_t$.
    \end{itemize}

    Let now $t$ be a pattern. By definition, $t$ is of the form
    $ft_1\ldots t_n$ with $f:\Pi x_1:T_1,\ldots,x_n:T_n,U$ and each
    $t_i$ an object-level algebraic term. As we have seen above, for
    all $i$, we have $\G\vdash t_i\theta:T_i\s\theta$ and $\G\vdash
    T_i\s\theta:\star$. Therefore, by completeness for object level
    algebraic terms, there are $A_i$, $\cE_i$ and $\h\theta_i$ extending
    $\theta$ such that $t_i\uparrow A_i[\cE_i]$, $\h\theta_i\models\cE_i$,
    $\G\vdash\h\theta_i:\D_{t_i}$ and $T_i\s\theta\simeq A_i\h\theta_i$. We can now
    conclude like in the previous case.
  \end{itemize}
\end{proof}

\begin{example}
  In our running example $\mi{tail}~n~(\mi{cons}~x~p~v)\hookrightarrow v$,
  we have seen that $\mi{cons}~x~p~v\uparrow V(sp)[\cE_1]$ with
  $\cE_1=\{{\h{x}=T},{\h{p}=N}$, ${\h{v}=Vp}\}$, and
  $\mi{tail}~n~(\mi{cons}~x~p~v)$ $\uparrow Vn[\cE_2]$ with
  $\cE_2=\cE_1\cup\{{\h{n}=N},{V(sp)=V(sn)}\}$. This means that, if
  $\s$ is a substitution and $(\mi{tail}~n~(\mi{cons}~x~p~v))\s$ is
  typable, then $\s\models\cE_2$. In particular, $V(sp\s)\simeq
  V(sn\s)$.
\end{example}

\subsection{Type-checking modulo typability constraints}

For checking that the right-hand side of a rewriting rule $l\hookrightarrow r$ has
the same type as the left-hand side modulo the typability constraints
$\cE$ of the left hand-side, we introduce a new $\lambda\Pi$-calculus as
follows:

\begin{definition}
  Given a pattern $l$ and a set of equations $\cE$ such that $\cR$
  contains no variable of $\{x\mid x\in \FV(l)\}\cup\{\h{x}\mid
  x\in\FV(l)\}$\footnote{This can always be done by renaming
    variables.}, we define a new $\lambda\Pi$-calculus
  $\Lambda_{l,\cE}=(\cF',\cV',\Theta',\Sigma',\cR')$ where:
\begin{itemize}
\item $\cF'=\cF\cup\{x\mid x\in\FV(l)\}\cup\{\h{x}\mid x\in\FV(l)\}$
\item $\cV'=\cV-(\{x\mid x\in\FV(l)\}\cup\{\h{x}\mid x\in\FV(l)\})$
\item $\Theta'=\Theta\cup\{(x,\h{x})\mid x\in\FV(l)\}\cup\{(\h{x},\star)\mid
  x\in\FV(l)\}$
\item $\Sigma'=\Sigma\cup\{(x,\star)\mid x\in\FV(l)\}\cup\{(\h{x},\B)\mid x\in\FV(l)\}$
\item $\cR'=\cR\cup\cE\cup\cE^{-1}$, where $l=r\in\cE^{-1}$ iff $r=l\in\cE$.
\end{itemize}
We denote by $\simeq_{l,\cE}$ the conversion relation of $\Lambda_{l,\cE}$,
and by $\vdash_{l,\cE}$ its typing relation.
\end{definition}

$\Lambda_{l,\cE}$ is similar to $\Lambda$ except that the symbols of $\{x\mid
x\in\FV(l)\}\cup\{\h{x}\mid x\in\FV(l)\}$ are not variables but
function symbols, and that the set of rewriting rules is extended by
$\cE\cup\cE^{-1}$ which, in $\Lambda_{l,\cE}$, is a set of closed rewriting
rules (rules and equations are synonyms: they both are pairs of
terms).

\begin{lemma}\label{lem-simeq}
  For all patterns $l$, sets of equations $\cE$ and substitutions $\s$
  in $\Lambda$, and for all terms $t,u$ in $\Lambda_{l,\cE}$, if $\s\models\cE$
  and $t\simeq_{l,\cE}u$, then $t\s\simeq u\s$.\footnote{Note that,
    here, we extend the notion of substitution by taking maps on
    $\cV\cup\cF$.}
\end{lemma}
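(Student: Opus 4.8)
The plan is to prove the statement by induction on the derivation of $t\simeq_{l,\cE}u$. Recall that $\simeq_{l,\cE}$ is the smallest equivalence relation containing $\hookrightarrow_{\cR'}$, where $\cR'=\cR\cup\cE\cup\cE^{-1}$. So I would treat $\simeq_{l,\cE}$ as generated by the reflexive-symmetric-transitive closure of one-step rewriting $\hookrightarrow_{\cR'}$. The reflexivity, symmetry, and transitivity cases are routine: reflexivity gives $t\s\simeq t\s$ trivially; symmetry uses that $\simeq$ in $\Lambda$ is itself an equivalence relation; transitivity composes the two induction hypotheses through the intermediate term. The heart of the argument is the one-step case, which itself splits along the definition of $\hookrightarrow_{\cR'}$ as the closure under context and substitution of the rules in $\cR'$.

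First I would handle a single top-level rewrite step $a\mapsto b$ using a rule $(a,b)$ that lies in $\cR'$. There are three sub-cases. If $(a,b)\in\cR$, then since $\cR$-rules are closed under substitution in $\Lambda$ and mention none of the new symbols, $a\s\hookrightarrow_\cR b\s$, so $a\s\simeq b\s$; here I should be slightly careful that $\s$, viewed (per the footnote) as a map on $\cV\cup\cF$, restricts to an ordinary $\Lambda$-substitution when applied to a term built from the original signature, which is the case for $a$ and $b$. If $(a,b)\in\cE$, then $a$ and $b$ are exactly the two sides of an equation satisfied by $\s$, so by the hypothesis $\s\models\cE$ we get $a\s\simeq b\s$ directly from Definition~\ref{dfn-typ-constr}. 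If $(a,b)\in\cE^{-1}$, then $(b,a)\in\cE$, so $b\s\simeq a\s$ and hence $a\s\simeq b\s$ by symmetry of $\simeq$.

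Next, from a top-level step one builds an arbitrary one-step $\hookrightarrow_{\cR'}$ reduction $t\hookrightarrow_{\cR'}u$ by plugging the rule instance into a context and through a substitution $\rho$ (the matching substitution of the rule application), i.e.\ $t=C[a\tau]$ and $u=C[b\tau]$ for a rule $(a,b)\in\cR'$ and some $\Lambda_{l,\cE}$-substitution $\tau$. I would argue: applying $\s$ commutes with contexts and composes with $\tau$, so $t\s=C\s[a(\tau\s)]$ and $u\s=C\s[b(\tau\s)]$. Now I can invoke the top-level case above with the substitution $\tau\s$ in place of $\s$ — but this needs $\tau\s\models\cE$. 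For rules in $\cR$ that is irrelevant (they are stable under any substitution), and for rules in $\cE\cup\cE^{-1}$ it holds because $\cE$-equations over $\Lambda_{l,\cE}$ are between \emph{closed} terms (the symbols $\h x$ and $x$ for $x\in\FV(l)$ are now function symbols, not variables), so $a\tau=a$ and $b\tau=b$ and we are back to the pure $\s\models\cE$ case. Having $a(\tau\s)\simeq b(\tau\s)$, stability of $\simeq$ in $\Lambda$ under contexts gives $C\s[a(\tau\s)]\simeq C\s[b(\tau\s)]$, i.e.\ $t\s\simeq u\s$, completing the one-step case and hence the induction.

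The step I expect to be the main obstacle — really the only subtle point — is bookkeeping the two notions of substitution and the reclassification of symbols: making sure that when $\s$ (a $\Lambda$-substitution, extended to act on $\cV\cup\cF$) is applied to a term of $\Lambda_{l,\cE}$, the former variables $x,\h x$ for $x\in\FV(l)$ (now function symbols in $\cF'$, hence in $\cF$ as far as $\s$ is concerned) get substituted correctly, and that equations of $\cE$ really are closed in $\Lambda_{l,\cE}$ so that matching them against a subterm forces $\tau$ to be trivial on their variables. Once that is pinned down, everything else is the standard "$\simeq$ is a congruence stable under substitution" routine.
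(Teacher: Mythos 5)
Your proof is correct and is essentially a fully spelled-out version of the paper's one-line argument (``each application of an equation $(a,b)\in\cE\cup\cE^{-1}$ can be replaced by a conversion $a\simeq b$'' via $\s\models\cE$), including the right key observation that the $\cE$-equations are closed in $\Lambda_{l,\cE}$ so the matching substitution is trivial on them. The only slip is that $\simeq_{l,\cE}$ is generated by $\hookrightarrow_{\cR'}\cup\hookrightarrow_\beta$, not $\hookrightarrow_{\cR'}$ alone, but the $\beta$ case is handled exactly like the $\cR$ case (stability under substitution), so nothing breaks.
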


\begin{proof}
  Immediate as each application of an equation
  $(a,b)\in\cE\cup\cE^{-1}$ can be replaced by a conversion $a\simeq
  b$.
\end{proof}

\begin{theorem}\label{thm-sr}
  For all patterns $l$, sets of equations $\cE$, and terms $T,r$ in
  $\Lambda$, if $l\uparrow T[\cE]$ and $\vdash_{l,\cE} r:T$, then $l\hookrightarrow r$ preserves
  typing in $\Lambda$.
\end{theorem}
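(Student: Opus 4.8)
The plan is to reduce the statement to the completeness direction of Lemma~\ref{lem-infer} together with a ``transfer'' lemma that moves typing derivations from $\Lambda_{l,\cE}$ back to $\Lambda$.

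Unfolding the goal, I must show that for every environment $\G$, substitution $\s$ and term $A$ with $\G\vdash l\s:A$, one has $\G\vdash r\s:A$. Fix such $\G,\s,A$. Since $l$ is a pattern and $\uparrow$ is a (partial) function, the completeness part of Lemma~\ref{lem-infer} applied to $\G\vdash l\s:A$ produces a substitution $\h\s$ extending $\s$ with $\h\s\models\cE$, $\G\vdash\h\s:\D_l$ and $A\simeq T\h\s$; here the determinism of $\uparrow$ is what guarantees that the term and equation set returned by completeness are exactly the $T$ and $\cE$ of the hypothesis. Moreover $\G$ is valid, and by Lemma~\ref{lem-basic}(c)--(d), $A\neq\B$ (the term $l\s$ is headed by a function symbol, hence is not a kind), so $\G\vdash A:s$ for some sort $s$.

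The heart of the argument is the following transfer lemma, proved by induction on typing derivations of $\Lambda_{l,\cE}$, after renaming the variables of the derivation so that they avoid $\G$ (which prevents name clashes and capture). Let $\sigma^*$ be the substitution, in the extended sense of a map on $\cV\cup\cF$, that sends each $y\in\FV(l)$ to $y\h\s$ and each $\h{y}$ with $y\in\FV(l)$ to $\h{y}\h\s$, and is the identity elsewhere. Then for every $\Lambda_{l,\cE}$-environment $\G'$ and terms $t,B$, if $\G'\vdash_{l,\cE}t:B$ then $\G,\G'\sigma^*\vdash t\sigma^*:B\sigma^*$ in $\Lambda$. The nontrivial cases are: rule (fun) applied to a symbol $y\in\FV(l)$ or $\h{y}$ with $y\in\FV(l)$, whose translated conclusion $\G\vdash y\h\s:\h{y}\h\s$, resp.\ $\G\vdash\h{y}\h\s:\star$, is precisely one of the judgments packaged in $\G\vdash\h\s:\D_l$ (recall $\Theta'_y=\h{y}$ and $\Theta'_{\h{y}}=\star$); rule (fun) applied to some $f\in\cF$, which holds in $\Lambda$ by the standing assumption $\vdash\Theta_f:\Sigma_f$ together with weakening, since $\Theta_f$ is closed and so $\Theta_f\sigma^*=\Theta_f$; and rule (conv), where from $T'\simeq_{l,\cE}U$ one gets $T'\sigma^*\simeq U\sigma^*$ in $\Lambda$ by Lemma~\ref{lem-simeq}, using that $\sigma^*$ coincides with $\h\s$ on the free variables of $\cE$ and that $\h\s\models\cE$. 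The remaining rules are routine congruences, relying on the fact that $\sigma^*$ commutes with the substitutions of (app) and (abs) and with product and abstraction formation, which is legitimate by the freshness arrangement.

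Applying the transfer lemma to the hypothesis $\vdash_{l,\cE}r:T$ (empty environment) gives $\G\vdash r\sigma^*:T\sigma^*$. Since $\FV(r)\sle\FV(l)$ and $\h\s$ extends $\s$, we have $r\sigma^*=r\s$; since $\FV(T)\sle\FV(l)$ we have $T\sigma^*=T\h\s$. Hence $\G\vdash r\s:T\h\s$, and combining this with $A\simeq T\h\s$, $\G\vdash A:s$ and rule (conv) yields $\G\vdash r\s:A$, which is what we wanted. The main obstacle is the transfer lemma: although it is morally just stability of typing under substitution, one has to check that the fresh function symbols of $\Lambda_{l,\cE}$ are sent to objects of the expected types (exactly the content of $\G\vdash\h\s:\D_l$) and that the added rewrite rules $\cE\cup\cE^{-1}$ turn into genuine conversions of $\Lambda$ once $\sigma^*$ is applied (where $\h\s\models\cE$ and Lemma~\ref{lem-simeq} are used); note in passing that confluence of $\Lambda_{l,\cE}$ plays no role here, being needed only to make the hypothesis $\vdash_{l,\cE}r:T$ decidable.
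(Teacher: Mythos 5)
Your proposal is correct and follows essentially the same route as the paper: apply the completeness half of Lemma~\ref{lem-infer} (using that $\uparrow$ is a function to identify its output with the given $T$ and $\cE$), then prove by induction on $\vdash_{l,\cE}$-derivations that typing transfers to $\Lambda$ under the substitution $\h\s$ viewed as a map on $\cV\cup\cF$ (your $\sigma^*$), handling the new function symbols via $\G\vdash\h\s:\D_l$ and the (conv) case via Lemma~\ref{lem-simeq}, and finish with $A\simeq T\h\s$, $r\h\s=r\s$ and a final conversion justified by $l\s$ not being a kind. The paper's proof is this argument almost verbatim.
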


\begin{proof}
Let $\D$ be an environment, $\s$ be a substitution and $A$ be a term
of $\Lambda$ such that $\D\vdash l\s:A$. By Lemma \ref{lem-infer}
(completeness), there are a term $T'$, a set of equations $\cE'$ and a
substitution $\h\s$ extending $\s$ such that $t\uparrow T'[\cE']$,
$\h\s\models\cE'$, $\D\vdash\h\s:\D_t$ and $A\simeq T'\h\s$. Since $\uparrow$
is a function, we have $T'=T$ and $\cE'=\cE$.

We now prove that, if $\G\vdash_{l,\cE} t:T$, then $\D,\G\h\s\vdash
t\h\s:T\h\s$, by induction on $\vdash_{l,\cE}$ (note that $\h\s$ replaces
function symbols by terms).
\begin{itemize}
\item[(fun)] $\cfrac{\vdash_{l,\cE}\Theta'_f:\Sigma'_f}{\vdash_{l,\cE}f:\Theta'_f}$. By induction
  hypothesis, we have $\D\vdash\Theta'_f\h\s:\Sigma'_f\h\s=\Sigma'_f$.
  \begin{itemize}
  \item Case $f\in\cF$. Then, $f\h\s=f$, $\Theta'_f\h\s=\Theta'_f=\Theta_f$ and
    $\Sigma'_f=\Sigma_f$. By inverting typing rules, we get
    $\vdash\Theta_f:\Sigma_f$. Therefore, by (fun) and (weak), $\D\vdash f:\Theta_f$,
    that is, $\D\vdash f\h\s:\Theta_f\h\s$.
  \item Case $f=x\in\FV(l)$. Then, $f\h\s=x\s$ and
    $\Theta'_f\h\s=\h{x}\h\s$. Therefore, $\D\vdash f\h\s:\Theta_f\h\s$ since
    $\D\vdash x\h\s:\h{x}\h\s$.
  \item Case $f=\h{x}$ with $x\in\FV(l)$. Then, $f\h\s=\h{x}\h\s$ and
    $\Theta'_f\h\s=\star$. Therefore, $\D\vdash f\h\s:\Theta_f\h\s$ since
    $\D\vdash\h{x}\h\s:\star$.
  \end{itemize}

\item[(conv)] $\cfrac{\G\vdash_{l,\cE} t:T\quad T\simeq_{l,\cE} U\quad
  \G\vdash_{l,\cE} U:s}{\G\vdash_{l,\cE} t:U}$. By induction hypothesis,
  $\D,\G\h\s\vdash t\h\s:T\h\s$ and $\D,\G\h\s\vdash U\h\s:s$. By Lemma
  \ref{lem-simeq}, $T\h\s\simeq U\h\s$ since
  $\h\s\models\cE$. Hence, by (conv), $\D,\G\h\s\vdash t\h\s:U\h\s$.
  
\item The other cases follow easily by induction hypothesis.
\end{itemize}

Hence, we have $\D\vdash r\h\s:T\h\s$. Since $\FV(r)\sle\FV(l)$, we have
$r\h\s=r\s$. Since $\D\vdash l\s:A$, by Lemma \ref{lem-basic}, either
$\D\vdash A:s$ for some sort $s$, or $A=\Box$ and $l\s$ is of the form
$\Pi x_1:A_1,\ldots,\Pi x_k:A_k,\star$. Since $l$ is a pattern, $l$
is of the form $fl_1\ldots l_n$. Therefore, $\D\vdash A:s$ and, by
(conv), $\D\vdash r\s:A$.
\end{proof}

\begin{example}
  We have seen that $\mi{cons}~x~p~v\uparrow V(sp)[\cE_1]$ with
  $\cE_1=\{{\h{x}=T},{\h{p}=N}$, ${\h{v}=Vp}\}$, and
  $\mi{tail}~n~(\mi{cons}~x~p~v)\uparrow Vn[\cE_2]$ with
  $\cE_2=\cE_1\cup\{{\h{n}=N},{V(sp)=V(sn)}\}$. After the previous
  theorem, the rewriting rule defining $\mi{tail}$ preserves typing if
  we can prove that $\vdash_{l,\cE_2}v:Vn$ where, in $\Lambda_{l,\cE_2}$, $v$
  is a function symbol of type $\h{v}$ and sort $\star$, and types are
  identified modulo $\simeq$ and the equations of $\cE_2$. But this is
  not possible since $v:Vp$ and $Vp\not\simeq_{l,\cE_2}Vn$. Yet, if
  $\s\models V(sp)=V(sn)$ and $V$ and $s$ are undefined then, by
  confluence, $\s\models p=n$ and thus $\s\models Vp=Vn$. We therefore
  need to simplify the set of equations before type-checking the
  right-hand side.
\end{example}

\subsection{Simplification of typability constraints}

In this section, we show that Theorem \ref{thm-sr} can be generalized
by using any valid simplification relation, and give an example of
such a relation.

\begin{definition}[Valid simplification relation]
  A relation $\leadsto$ on sets of equations is valid if, for all sets
  of equations $\cD,\cD'$ and substitutions $\s$, if $\s\models\cD$
  and $\cD\leadsto\cD'$, then $\s\models\cD'$.
\end{definition}

Theorem \ref{thm-sr} can be easily generalized as follows:

\begin{theorem}[Preservation of typing]\label{thm-sr2}
  For all patterns $l$, sets of equations $\cD,\cE$, and terms $T,r$
  in $\Lambda$, if $l\uparrow T[\cD]$, $\cD\leadsto^*\cE$ and
  $\vdash_{l,\cE} r:T$, then $l\hookrightarrow r$ preserves typing in
  $\Lambda$.
\end{theorem}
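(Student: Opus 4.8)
The plan is to follow the proof of Theorem~\ref{thm-sr} almost verbatim, inserting one extra step that uses the validity of $\leadsto$. Let $\D$ be an environment, $\s$ a substitution and $A$ a term of $\Lambda$ with $\D\vdash l\s:A$. First I would apply Lemma~\ref{lem-infer} (completeness) to obtain a term $T'$, a set of equations $\cD'$ and a substitution $\h\s$ extending $\s$ such that $l\uparrow T'[\cD']$, $\h\s\models\cD'$, $\D\vdash\h\s:\D_l$ and $A\simeq T'\h\s$. Since $\uparrow$ is a partial function and $l\uparrow T[\cD]$ by hypothesis, this forces $T'=T$ and $\cD'=\cD$; in particular $\h\s\models\cD$.

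The new ingredient is the observation that satisfaction of a set of equations is preserved along $\leadsto$. By a straightforward induction on the length of the reduction sequence $\cD\leadsto^*\cE$, applying the definition of a valid simplification relation at each step, from $\h\s\models\cD$ I get $\h\s\models\cE$. From here the argument is exactly that of Theorem~\ref{thm-sr}: one proves by induction on $\vdash_{l,\cE}$ that every derivable judgement $\G\vdash_{l,\cE}t:B$ yields $\D,\G\h\s\vdash t\h\s:B\h\s$, where $\h\s$ is now read as a map on $\cV\cup\cF$ replacing the function symbols $x$ and $\h x$ (for $x\in\FV(l)$) of $\Lambda_{l,\cE}$ by terms of $\Lambda$. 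The (fun) case splits, as before, on whether the symbol lies in $\cF$, equals some $x\in\FV(l)$, or equals some $\h x$; the only case using the new material is (conv), where a conversion $B\simeq_{l,\cE}B'$ between the types involved, together with Lemma~\ref{lem-simeq} and $\h\s\models\cE$, gives $B\h\s\simeq B'\h\s$, so that (conv) of $\Lambda$ applies; all other cases are immediate from the induction hypothesis.

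Applying this to the hypothesis $\vdash_{l,\cE}r:T$ gives $\D\vdash r\h\s:T\h\s$, and $r\h\s=r\s$ since $\FV(r)\sle\FV(l)$; as $T\h\s\simeq A$ it then remains to promote $\D\vdash r\s:T\h\s$ to $\D\vdash r\s:A$ by (conv), for which Lemma~\ref{lem-basic} supplies $\D\vdash A:s$ for some sort $s$ (the alternative $A=\B$ is excluded because it would make $l\s$, hence $l$, a kind, whereas $l$ is a pattern $fl_1\ldots l_n$). I expect no genuine obstacle beyond what was already handled in Theorem~\ref{thm-sr}: the whole proof only ever used $\h\s\models\cE$, and validity of $\leadsto$ is precisely what transports this property from $\cD$ to $\cE$. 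The one point worth double-checking is that simplification does not alter the target type $T$ against which $r$ is type-checked --- and it does not, since $T$ is produced by $\uparrow$ from $l$ and $\cD$, before any simplification, and the hypothesis $\vdash_{l,\cE}r:T$ already refers to that same $T$.
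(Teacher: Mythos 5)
Your proof is correct and is exactly the argument the paper intends: the paper gives no explicit proof of Theorem~\ref{thm-sr2}, stating only that it is an easy generalization of Theorem~\ref{thm-sr}, and your single added step --- transporting $\h\s\models\cD$ to $\h\s\models\cE$ by induction on $\cD\leadsto^*\cE$ using the validity of the simplification relation, before rerunning the induction on $\vdash_{l,\cE}$ --- is precisely the missing ingredient. Your closing observations (that only the (conv) case uses $\h\s\models\cE$ via Lemma~\ref{lem-simeq}, and that the target type $T$ is computed from $l$ before simplification and hence unaffected) are also accurate.
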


We have seen in the previous example that, thanks to confluence,
$\sigma\models p=n$ whenever $\sigma\models sp=sn$ and $s$ is
undefined. But this last condition is a particular case of a more
general property:

\begin{definition}[$I$-injectivity]
  Given $f:\Pi x_1:T_1,\ldots,\Pi x_n:T_n,U$ and a set
  $I\sle\{1,\ldots,n\}$, we say that $f$ is $I$-injective when, for
  all $t_1,u_1,\ldots,t_n,u_n$, if $ft_1\ldots t_n\simeq fu_1\ldots
  u_n$ and, for all $i\notin I$, $t_i\simeq u_i$, then, for all $i\in
  I$, $t_i\simeq u_i$.
\end{definition}

For instance, $f$ is $\{1,\ldots,n\}$-injective if $f$ is
undefined. The new version of Dedukti allows users to declare if a
function symbol is $I$-injective (like the function $\tau$ in Example
\ref{ex-beta}), and a procedure for checking $I$-injectivity of
function symbols defined by rewriting rules has been developed and
implemented in Dedukti \cite{wu19tr}. For instance, the function
symbol $\tau$ of Example \ref{ex-beta}, which is defined by the rule
$\tau(\mr{arr}\,x\,y)\hookrightarrow\tau x\rightarrow\tau y$, can be
proved to be $\{1\}$-injective.

Clearly, $I$-injectivity can be used to define a valid simplification
relation. In fact, one can easily check that the following
simplification rules are valid too:

\begin{lemma}
 The relation defined in Figure \ref{fig-simpl} is a valid
 simplification relation.
\end{lemma}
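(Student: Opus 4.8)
The statement to prove is that the simplification relation of Figure~\ref{fig-simpl} is valid, i.e.\ that whenever $\s\models\cD$ and $\cD\leadsto\cD'$, then $\s\models\cD'$. Since $\leadsto$ is defined rule by rule, and validity of a union of valid relations is immediate (the relation $\leadsto$ is the union of the one-step relations given by each individual simplification rule, and a step $\cD\leadsto\cD'$ uses exactly one of them), the plan is to verify the validity condition \emph{separately for each simplification rule} appearing in Figure~\ref{fig-simpl}. For each rule, one assumes a substitution $\s$ with $\s\models\cD$ and must show $\s\models\cD'$, which, unfolding Definition~\ref{dfn-typ-constr}, means: for every equation $a=b$ in $\cD'$, one has $a\s\simeq b\s$, given that this holds for every equation in $\cD$.

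First I would dispatch the bookkeeping-style rules: deleting a trivial equation $a=a$ (the resulting set is a subset of $\cD$, so $\s\models\cD'$ is immediate), orienting/reversing an equation $a=b$ into $b=a$ (uses symmetry of $\simeq$), and decomposing $ft_1\ldots t_n=fu_1\ldots u_n$ into $t_1=u_1,\ldots,t_n=u_n$ \emph{when $f$ is a constructor-like/undefined symbol} or, more generally, replacing such an equation by the $t_i=u_i$ for $i\in I$ together with $t_i=u_i$ for $i\notin I$ kept as hypotheses — here the point is precisely $I$-injectivity of $f$: from $\s\models\cD$ we get $ft_1\s\ldots t_n\s\simeq fu_1\s\ldots u_n\s$ and $t_i\s\simeq u_i\s$ for $i\notin I$, so $I$-injectivity yields $t_i\s\simeq u_i\s$ for $i\in I$, which is exactly $\s\models\cD'$. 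The substitution/elimination rule (replacing a variable $\h{x}$, or a variable $x$, everywhere by the other side of an equation $\h{x}=A\in\cD$ and discarding that equation) is handled by noting that $\s\models(\h{x}=A)$ gives $\h{x}\s\simeq A\s$, and then that $\simeq$ is stable by context and substitution, so applying $\{(\h{x},A)\}$ inside any other equation of $\cD$ preserves its validity under $\s$; one must be mildly careful that the variables being eliminated are the auxiliary $\h{x}$ (or the $x$) introduced for the pattern, so no capture or circularity arises — this is guaranteed by the disjointness assumptions already in force on algebraic versus non-algebraic variables.

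The main obstacle will be the rule (if present in Figure~\ref{fig-simpl}) that \emph{introduces} new equations by decomposing conversions through products or through the structure of defined symbols — e.g.\ a step that turns $\Pi x{:}A,B = \Pi x{:}A',B'$ into $A=A'$ and $B=B'$. Its validity rests not on elementary properties of $\simeq$ but on the confluence assumption on $\hookrightarrow$ (and the fact that left-hand sides are patterns): from $\s\models(\Pi x{:}A,B=\Pi x{:}A',B')$ one gets $\Pi x{:}A\s,B\s\simeq\Pi x{:}A'\s,B'\s$, and then the product-compatibility of $\simeq$ — stated in the excerpt as a consequence of confluence, "if $\Pi x{:}A,B\simeq\Pi x{:}A',B'$ then $A\simeq A'$ and $B\simeq B'$" — delivers $A\s\simeq A'\s$ and $B\s\simeq B'\s$. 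So the argument is short, but it is the one step where the global confluence hypothesis is genuinely used rather than just stability by context and substitution; I would make sure to invoke it explicitly. Having checked every rule, one concludes that each one-step $\leadsto$ preserves $\s\models(\cdot)$, hence so does $\leadsto$, which is the claim.
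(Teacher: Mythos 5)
Your overall strategy --- verify the validity condition separately for each rule of Figure~\ref{fig-simpl}, using stability of $\simeq$ under substitution, product-compatibility of $\simeq$ (a consequence of the confluence assumption), and $I$-injectivity --- is exactly the paper's, which likewise proceeds rule by rule and in fact only writes out the first case. Two mismatches with the actual figure are worth noting. First, the figure contains exactly three rules: (i) replace $t=u$ by $t'=u'$ when $t\hookrightarrow^*t'$ and $u\hookrightarrow^*u'$; (ii) decompose a product equation into its domain and codomain parts; (iii) decompose $ft_1\ldots t_n=fu_1\ldots u_n$ using $I$-injectivity. You handle (ii) and (iii) correctly --- and (ii) is indeed the one place where confluence, via product-compatibility of $\simeq$, is genuinely needed --- but you never mention (i), which is the only rule the paper details; its justification is the one-liner you already use elsewhere: $t\hookrightarrow^*t'$ gives $t\simeq t'$, and stability of $\simeq$ under substitution plus transitivity give $t'\s\simeq u'\s$. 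The deletion, orientation, and variable-elimination rules you discuss are not in the figure (harmless, since your arguments for them are sound). Second, a small precision on rule (iii): its side condition is $t_i\simeq_{l,\cD}u_i$ for $i\notin I$, i.e.\ convertibility in the extended calculus $\Lambda_{l,\cD}$, not the presence of the equations $t_i=u_i$ in $\cD$; to obtain $t_i\s\simeq u_i\s$ from this together with $\s\models\cD$ one must invoke Lemma~\ref{lem-simeq}, after which $I$-injectivity applies exactly as you say. Neither point is a gap in the underlying ideas, but both would need to be fixed to match the relation actually defined in the figure.
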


\begin{proof}
  We only detail the first rule which says that, if some substitution
  $\sigma$ validates some equation $t=u$, that is, if $t\sigma\simeq
  u\sigma$, then $\sigma$ validates any equation $t'=u'$ where $t'$
  and $u'$ are reducts of $t$ and $u$ respectively. Indeed, since $t'$
  is a reduct of $t$, $t'\simeq t$. Similarly, $u'\simeq
  u$. Therefore, by stability of conversion by substitution and
  transitivity, $t'\sigma\simeq u'\sigma$.
\end{proof}

\begin{figure}[ht]
  \caption{Some valid simplification rules on typability constraints\label{fig-simpl}}
  $$\begin{array}{rcl}
    {\cD}\uplus{\{t=u\}}
    &\leadsto& {\cD}\cup{\{t'=u'\}}\text{ if $t\hookrightarrow^*t'$ and $u\hookrightarrow^*u'$}\\

    {\cD}\uplus{\{\Pi x:t_1,t_2=\Pi x:u_1,u_2\}}
    &\leadsto& {\cD}\cup{\{t_1=u_1,t_2=u_2\}}\text{ if $x$ is fresh}\\

    {\cD}\uplus{\{ft_1\ldots t_n=fu_1\ldots u_n\}}
    &\leadsto& {\cD}\cup{\{t_i=u_i\mid i\in I\}}\\
    &&\text{if $f$ is $I$-injective and }\all i\notin I, t_i\simeq_{l,\cD}u_i\\
  \end{array}$$
\end{figure}

\begin{example}
  We can now handle our running example. We have $\mi{cons}~x~p~v\uparrow
  V(sp)[\cE_1]$ with $\cE_1=\{{\h{x}=T},{\h{p}=N},{\h{v}=Vp}\}$,
  $l=\mi{tail}~n~(\mi{cons}~x~p~v)\uparrow Vn[\cE_2]$ with
  $\cE_2=\cE_1\cup\{{\h{n}=N},{V(sp)=V(sn)}\}$, and
  $\cE_2\leadsto^*\cE_2'=\cE_1\cup\{{\h{n}=N},{p=n}\}$ since $V$ and
  $s$ are $\{1\}$-injective. Therefore, $\vdash_{l,\cE_2'}v:Vn$ and $l\hookrightarrow
  v$ preserves typing.
\end{example}

The above simplification relation works for the rewriting rule
defining $\mi{tail}$ but may not be sufficient in more general
situations:

\begin{example}
  Let $\cD$ be the set of equations $\{fct=ga,fcu=gb,a=b\}$ and assume
  that $f$ is $\{2\}$-injective. Then the equation $t=u$ holds as
  well, but $\cD$ cannot be simplified by the above rules because it
  contains no equation of the form $fct=fcu$.
\end{example}

We leave for future work the development of more general
simplification relations.

\subsection{Decidability conditions}

We now discuss the decidability of type-checking in $\Lambda_{l,\cE}$ and
of the simplification relation based on injectivity, assuming that
${\hookrightarrow_\beta}\cup{\hookrightarrow_\cR}$ is terminating and confluent so that type-checking is
decidable in $\Lambda$. In both cases, we have to decide $\simeq_{l,\cE}$,
the reflexive, symmetric and transitive closure of
${\hookrightarrow_\beta}\cup{\hookrightarrow_\cR}\cup{\hookrightarrow_\cE}\cup{\hookrightarrow_{\cE^{-1}}}$, where $\cE$ is a set
of closed equations.

As it is well known, an equational theory is decidable if there exists
a convergent (\ie terminating and confluent) rewriting system having
the same equational theory: to decide whether two terms are
equivalent, it suffices to check that their normal forms are
identical.

In \cite{knuth67}, Knuth and Bendix introduced a procedure to compute
a convergent rewriting system included in some termination ordering,
when equations are algebraic. Interestingly, this procedure always
terminates when equations are closed, if one takes a termination
ordering that is total on closed terms like the lexicographic path
ordering $>_{lpo}$ wrt any total order $>$ on function symbols (for more
details, see for instance \cite{baader98book}).

For the sake of self-contentness, we recall in Figure \ref{fig-kb} a
rule-based definition of closed completion. These rules operate on a
pair $(\cE,\cD)$ made of a set of equations $\cE$ and a set of rules
$\cD$. Starting from $(\cE,\emptyset)$, completion consists in applying
these rules as long as possible. This process necessarily ends on
$(\emptyset,\cD)$ where $\cD$ is terminating (because
${\cD}\sle{>_{lpo}}$) and confluent (because it has no critical
pairs).

\begin{figure}[ht]
  \caption{Rules for closed completion\label{fig-kb}}
  $$\begin{array}{rcl}
    (\cE\uplus\{l=r\},\cD) &\leadsto& (\cE,\{l\hookrightarrow r\}\cup\cD)\text{ if } l>r\\
    (\cE\uplus\{l=r\},\cD) &\leadsto& (\cE,\{r\hookrightarrow l\}\cup\cD)\text{ if } l<r\\
    (\cE\uplus\{t=t\},\cD) &\leadsto& (\cE,\cD)\\
    (\cE,\{l[g]\hookrightarrow r,g\hookrightarrow d\}\uplus\cD) &\leadsto& (\cE\cup\{l[d]=r\},\{g\hookrightarrow d\}\cup\cD)\\
    (\cE,\{l\hookrightarrow r[g],g\hookrightarrow d\}\uplus\cD) &\leadsto& (\cE,\{l\hookrightarrow r[d],g\hookrightarrow d\}\cup\cD)\\
  \end{array}$$
\end{figure}

We leave for future work the extension of this procedure to the case
of non-algebraic, and possibly higher-order, equations.

If we apply this procedure to the set $\cE$ of equations (assuming
that they are algebraic), we get that $\simeq_{l,\cE}$ is the
reflexive, symmetric and transitive closure of
${\hookrightarrow_\beta}\cup{\hookrightarrow_\cR}\cup{\hookrightarrow_\cD}$, where ${\hookrightarrow_\beta}\cup{\hookrightarrow_\cR}$ and $\hookrightarrow_\cD$ are
both terminating and confluent. However, termination is not modular in
general, even when combining two systems having no symbols in common
\cite{toyama87ipl}.

There exists many results on the modularity of confluence and
termination of first-order rewriting systems when these systems have
no symbols in common, or just undefined symbols (see for instance
\cite{gramlich12tcs} for some survey). But, here, we have higher-order
rewriting rules that may share defined symbols.

So, instead, we may try to apply general modularity results on
abstract relations \cite{doornbos98jigpal}. In particular, for all
terminating relations $P$ and $Q$, $P\cup Q$ terminates if $P$ steps
can be postponed, that is, if ${PQ}\sle{QP^*}$. In our case, we may
try to postpone the $\cD$ steps:

\begin{lemma}\label{lem-postpone}
  For all sets of higher-order rewriting rules $\cR$ and $\cD$, we
  have that ${\hookrightarrow_\beta}\cup{\hookrightarrow_\cR}\cup{\hookrightarrow_\cD}$ terminates if:
  \begin{enumerate}[(a)]\itemsep=0mm
  \item\label{pp-sn} ${\hookrightarrow_\beta}\cup{\hookrightarrow_\cR}$ and $\hookrightarrow_\cD$ terminate,
  \item\label{pp-ll} $\cR$ is left-linear,
  \item\label{pp-cl} $\cD$ is closed,
  \item\label{pp-nf} no right-hand side of $\cD$ is $\beta\cR$-reducible
    or headed by an abstraction,
  \item\label{pp-ov} no right-hand side of $\cD$ unifies with a non-variable
    subterm of a left-hand side of $\cR$.
  \end{enumerate}
\end{lemma}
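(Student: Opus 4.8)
The strategy is to apply the abstract postponement criterion cited just before the lemma: if $P$ and $Q$ are terminating and ${PQ}\sle{QP^*}$, then $P\cup Q$ terminates. Here I take $Q={\hookrightarrow_\beta}\cup{\hookrightarrow_\cR}$ and $P={\hookrightarrow_\cD}$, so what must be shown is that a $\cD$-step followed by a $\beta\cR$-step can always be re-ordered into a $\beta\cR$-step followed by zero or more $\cD$-steps: ${({\hookrightarrow_\cD})({\hookrightarrow_\beta}\cup{\hookrightarrow_\cR})}\sle{({\hookrightarrow_\beta}\cup{\hookrightarrow_\cR})({\hookrightarrow_\cD})^*}$. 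Given hypotheses \eqref{pp-sn}--\eqref{pp-ov}, termination of $\cD$ and of $\beta\cR$ is assumed, so the whole content of the proof is this local commutation diagram.

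**Key steps.** First I would set up a single $\cD$-step $s\hookrightarrow_\cD t$ at position $p$ with rule $g\hookrightarrow d$ (so $s|_p=g$, $t=s[d]_p$), followed by a $\beta$- or $\cR$-step $t\hookrightarrow u$ at position $q$, and analyze the relative positions of $p$ and $q$. If $p$ and $q$ are disjoint, the two steps trivially commute, yielding $s\hookrightarrow u'\hookrightarrow_\cD u$. If $q$ is strictly above $p$: the $\beta\cR$-redex in $t$ contains the subterm $d$; since $d$ is closed (from \eqref{pp-cl}, $\cD$-rules are closed, so $d$ is a closed term) I must argue that the $\beta\cR$-redex was already present in $s$ at the same position — here \eqref{pp-ll} (left-linearity of $\cR$) is what guarantees the matching of the $\cR$-redex in $t$ does not depend on $d$ appearing in two places and equalling some other subterm, so the redex pattern already matches in $s$, where $g$ sits in place of $d$; I then fire that $\beta\cR$-step in $s$ and finish by possibly one $\cD$-step rewriting the residual copies of $g$ (a $\beta$-step may duplicate the subterm, hence the reflexive-transitive closure $({\hookrightarrow_\cD})^*$). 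The delicate case is $q$ below or equal to $p$, i.e. the $\beta\cR$-redex lies inside $d$ (or is all of $d$, or spans the $g\!\to\!d$ interface). Since $d$ is a $\cD$-normal form this is a genuine $\beta\cR$-redex in $d$, but \eqref{pp-nf} says no right-hand side of $\cD$ is $\beta\cR$-reducible, so $d$ has no $\beta\cR$-redex strictly inside it, and \eqref{pp-nf} also forbids $d$ being headed by an abstraction, ruling out a $\beta$-redex created at the top of $d$ by an application context from $s$. The remaining subcase is an $\cR$-redex straddling the interface, whose pattern uses part of $d$ at a non-variable position of the left-hand side of an $\cR$-rule; this is exactly what \eqref{pp-ov} excludes (no right-hand side of $\cD$ unifies with a non-variable subterm of a left-hand side of $\cR$). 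Hence this case is vacuous and only the two benign cases occur, in both of which the step re-ordering succeeds with a $({\hookrightarrow_\cD})^*$ tail.

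**Main obstacle.** The subtle point, and the one I would write most carefully, is the "straddling" analysis in the case $q\le p$: one has to be precise about what it means for a redex to be "created" by the $\cD$-step and to line up the use of hypotheses \eqref{pp-nf} (redex strictly inside $d$, or fresh $\beta$-redex at the head of $d$) versus \eqref{pp-ov} ($\cR$-redex using $d$ below the head). A second, more mundane obstacle is bookkeeping the duplication of the $g$-subterm by a $\beta$-step in the case $q > p$: after the postponed $\beta$-step the term contains several disjoint copies of $g$, each of which must be rewritten by $\cD$, so one genuinely needs $({\hookrightarrow_\cD})^*$ rather than a single step, and one should note disjointness makes these $\cD$-steps independent. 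Once the local diagram ${({\hookrightarrow_\cD})({\hookrightarrow_\beta}\cup{\hookrightarrow_\cR})}\sle{({\hookrightarrow_\beta}\cup{\hookrightarrow_\cR})({\hookrightarrow_\cD})^*}$ is established, termination of ${\hookrightarrow_\beta}\cup{\hookrightarrow_\cR}\cup{\hookrightarrow_\cD}$ follows immediately from the postponement lemma for abstract relations \cite{doornbos98jigpal}, using \eqref{pp-sn}.
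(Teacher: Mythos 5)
Your proof is correct and follows essentially the same route as the paper's: postpone the $\cD$-steps via the abstract criterion ${PQ}\sle{QP^*}$ with $P={\hookrightarrow_\cD}$ and $Q={\hookrightarrow_\beta}\cup{\hookrightarrow_\cR}$, then establish the local commutation by a case analysis on the relative positions of the two redexes, using closedness and non-reducibility of the $\cD$ right-hand sides to kill the ``redex inside $d$'' case, the no-abstraction-head and no-overlap conditions to kill the redex-creation cases, and left-linearity plus $({\hookrightarrow_\cD})^*$ for duplication in the remaining benign case. The only (cosmetic) slip is that the ``straddling'' subcases --- a $\beta$-redex created at the head of $d$ and an $\cR$-pattern overlapping $d$ at a non-variable position --- belong to the case where the $\beta\cR$-redex is \emph{strictly above} the $\cD$-redex, not below it, but you identify all of them and match each to the correct hypothesis.
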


\begin{proof}
  As usual, we define positions in a term as words on $\{1,2\}$:
  $\Pos(s)=\Pos(x)=\Pos(f)=\{\vep\}$, the empty word representing the
  root position, and $\Pos(tu)=\Pos(\lambda x:t,u)=\Pos(\Pi
  x:t,u)=1\cdot\Pos(t)\cup 2\cdot\Pos(u)$.
  
  Assume that $t\hookrightarrow_\cD u$ at position $p$ and $u\hookrightarrow_{\beta\cR}v$ at
  position $q$. If $p$ and $q$ are disjoint, then these reductions can
  be trivially permuted: $t\hookrightarrow_{\beta\cR}\hookrightarrow_\cD v$. The case $p\le q$ ($p$
  prefix of $q$) is not possible since $\cD$ is closed (\ref{pp-cl})
  and no right-hand side of $\cD$ is $\beta\cR$-reducible
  (\ref{pp-nf}). So, we are left with the case $q<p$:

  \begin{itemize}
  \item Case $u\hookrightarrow_\beta v$. The case $p=q1$ is not possible since no
    right-hand side of $\cD$ is headed by an abstraction
    (\ref{pp-nf}). So, $t|_q$ is of the form $(\lambda x:A,b)a$ and the
    $\cD$ step is in $A$, $b$ or $a$. Therefore, $t\hookrightarrow_\beta\hookrightarrow_\cD^*v$.
    
  \item Case $u\hookrightarrow_\cR v$, that is, when $u|_q=l\s$ where $l$ is a
    left-hand side of a rule of $\cR$. The case $p=qs$ where $s$ is a
    non-variable position of $l$ is not possible because no
    non-variable subterm of a left-hand side of $\cR$ unifies with a
    right-hand side of $\cD$ (\ref{pp-ov}). Therefore, since $l$ is
    left-linear (\ref{pp-ll}), $t|_q$ is of the form $l\theta$ for some
    substitution $\theta$, and the $\cD$ step occurs in some $x\theta$. Hence,
    $t\hookrightarrow_\cR\hookrightarrow_\cD^*v$.
  \end{itemize}
\end{proof}

\begin{example}
  As we have already seen, the typability conditions of
  $l=\mi{tail}~n~(\mi{cons}~x~p~v)$ is the set of equations
  $\cE=\{{\h{x}=T},{\h{p}=N},{\h{v}=Vp},{\h{n}=N},{V(sp)=V(sn)}\}$.
  By taking $\h{x}>\h{v}>\h{p}>\h{n}>V>T>N>s>p>n$ as total order on
  function symbols, the Knuth-Bendix completion procedure yields with
  $>_{lpo}$ the rewriting system $\cD=\{\h{x}\hookrightarrow
  T,\h{p}\hookrightarrow N,\h{v}\hookrightarrow Vp,\h{n}\hookrightarrow
  N,V(sp)\hookrightarrow V(sn)\}$. After Lemma \ref{lem-postpone},
  ${\hookrightarrow_\beta}\cup{\hookrightarrow_\cR}\cup{\hookrightarrow_\cD}$ is convergent if
  ${\hookrightarrow_\beta}\cup{\hookrightarrow_\cR}$ is convergent, $\cR$ is left-linear and $V$ and $s$
  are undefined. This works as well if, instead of $\cE$, we use its
  simplification
  $\cE'=\{{\h{x}=T},{\h{p}=N},{\h{v}=Vp},{\h{n}=N},{p=n}\}$. In this
  case, we get the rewriting system $\cD=\{\h{x}\hookrightarrow
  T,\h{p}\hookrightarrow N,\h{v}\hookrightarrow Vn,\h{n}\hookrightarrow
  N,p\hookrightarrow n\}$.
\end{example}

\begin{example}
  Finally, let's come back to the rewriting rule
  $\mr{app}\,a\,b\,(\mr{lam}\,a'\,b'\,f)\,x\hookrightarrow f\,x$ of
  Example \ref{ex-beta} encoding the $\beta$-reduction of simply-type
  $\lambda$-calculus. As already mentioned, the previous version of
  Dedukti was unable to prove that this rule preserves typing. Thanks
  to our new algorithm, the new version of
  Dedukti\footnote{\url{https://github.com/Deducteam/lambdapi}} can
  now do it.

  The computability constraints of the LHS are $\h{f}=\tau
  a'\rightarrow\tau b'$, $\tau a'\rightarrow\tau
  b'=\tau(\mr{arr}\,a\,b)$ and $\h{x}=\tau a$. Preservation of typing
  cannot be proved without simplifying this set of equations to
  $\h{f}=\tau a'\rightarrow\tau b'$, $\tau a'=\tau a$, $\tau b'=\tau
  b$ and $\h{x}=\tau a$.

  Then, any total order on function symbols allows to prove
  preservation of typing. For instance, by taking
  $\h{f}>\,\rightarrow\,>a'>a>b'>b$, we get the rewriting rules
  $\h{f}\hookrightarrow\tau a\rightarrow\tau b$, $\tau a'\hookrightarrow\tau
  a$, $\tau b'\hookrightarrow\tau b$ and $\h{x}\hookrightarrow\tau a$, so that
  one can easily check that, modulo these rewriting rules, $f\,x$ has
  type $\tau b$. Therefore,
  $\mr{app}\,a\,b\,(\mr{lam}\,a'\,b'\,f)\,x\hookrightarrow f\,x$ preserves
  typing.

  Note that the result does not depend on the total order taken on
  function symbols. For instance, if one takes
  $\h{f}>\,\rightarrow\,>a>a'>b'>b$ (flipping the order of $a$ and
  $a'$), we get the rewriting rules $\h{f}\hookrightarrow\tau
  a'\rightarrow\tau b$, $\tau a\hookrightarrow\tau a'$, $\tau
  b'\hookrightarrow\tau b$ and $\h{x}\hookrightarrow\tau a'$. In this case,
  $f\,x$ has type $\tau b$ as well. Flipping the order of $b$ and $b'$
  would work as well.
\end{example}

\section{Related works and conclusion}
\label{sec-conclu}

The problem of type safety of rewriting rules in dependent type theory
modulo rewriting has been first studied for simply-typed function
symbols by Barbanera, Fern\'andez and Geuvers in
\cite{barbanera97jfp}. In \cite{blanqui05mscs}, the author extended
these results to polymorphically and dependently typed function
symbols, and showed that rule left-hand sides do not need to be
typable for rewriting to preserve typing. This was later studied in
more details and implemented in Dedukti by Saillard
\cite{saillard15phd}. In this approach, one first extracts a
substitution $\rho$ (called a pre-solution in Saillard's work) from the
typability constraints of the left-hand side $l$ and check that, if
$l$ is of type $A$, then the right-hand side $r$ is of type $A\rho$ (in
the same system). For instance, from the simplified set of constraints
$\cE'=\{{\h{x}=T},{\h{p}=N},{\h{v}=Vp},{\h{n}=N},{p=n}\}$ of our
running example, one can extract the substitution $\rho=\{(n,p)\}$ and
check that $v$ has type $(Vn)\rho=Vp$. However, it is not said how to
compute useful pre-solutions (note that we can always take the
identity as pre-solution). In practice, the pre-solution is often
given by the user thanks to annotations in rules. A similar mechanism
called inaccessible or ``dot'' patterns exists in Agda too
\cite{norell07phd}.

An inconvenience of this approach is that, in some cases, no useful
pre-solution can be extracted. For instance, if, in the previous
example, we take the original set of constraints
$\cE=\{{\h{x}=T},{\h{p}=N},{\h{v}=Vp},{\h{n}=N},{V(sp)=V(sn)}\}$
instead of its simplified version $\cE'$, then we cannot extract any
useful pre-solution.

In this paper, we proposed a more general approach where we check that
the right-hand side has the same type as the left-hand side modulo the
equational theory generated by the typability constraints of the
left-hand side seen as closed equations (Theorem \ref{thm-sr}). A
prototype implementation is available
on:\\\url{https://github.com/wujuihsuan2016/lambdapi/tree/sr}.

To ensure the decidability of type-checking in this extended system,
we propose to replace these equations by an equivalent but convergent
rewriting system using Knuth-Bendix completion \cite{knuth67} (which
always terminates on closed equations), and provide conditions for
preserving the termination and confluence of the system when adding
these new rules (Lemma \ref{lem-postpone}). This approach has also the
advantage that Dedukti itself can be used to check the type safety of
user-defined Dedukti rules.

We also showed that, for the algorithm to work, the typability
constraints sometimes need to be simplified first, using the fact that
some function symbols are injective (Theorem \ref{thm-sr2}). It would
be interesting to be able to detect or check injectivity automatically
(see \cite{wu19tr} for preliminary results on this topic), and also to
find a simplification procedure more general than the one of Figure
\ref{fig-simpl}.

\bibliography{fscd20}

\end{document}